\newtheorem{lemma}{Lemma}
\newenvironment{proof}{{\noindent\it \quad Proof:}\quad}{\hfill $\square$\par}
\def\BibTeX{{\rm B\kern-.05em{\sc i\kern-.025em b}\kern-.08em
    T\kern-.1667em\lower.7ex\hbox{E}\kern-.125emX}}
\begin{document}
\title{Two-Timescale Synchronization and Migration for Digital Twin Networks: A Multi-Agent Deep Reinforcement Learning Approach}

\author{ {Wenshuai Liu,~\IEEEmembership{Student Member,~IEEE}, Yaru Fu,~\IEEEmembership{Member,~IEEE},~Yongna Guo,~\IEEEmembership{Member,~IEEE}\\
Fu Lee Wang,~\IEEEmembership{Senior Member,~IEEE},~Wen~Sun,~\IEEEmembership{Senior Member,~IEEE},~and~Yan Zhang,~\IEEEmembership{Fellow,~IEEE}}
	\thanks{ 
   	This work was supported in part by the Hong Kong Research Matching Grant (RMG) in the Central Pot under Project No. CP/2022/2.1, in part by the Research and Development Fund (R\&D Fund) under reference No. RD/2023/1.8,  in part by the Team-based Research Fund under Project No. TBRF/2024/1.10, and in part by the EU Horizon 2020 Research and Innovation Programme under the Marie Sklodowska-Curie grant agreement No. 101008297. This article reflects only the authors' view. The European Union Commission is not responsible for any use that may be made of the information it contains. \emph{(Corresponding author: Yaru Fu)} 
	
		W. Liu, Y. Fu, and F.-L. Wang are with the School of Science and Technology,  Hong Kong Metropolitan University, Hong Kong, 999077, China (e-mail: liuws1996@gmail.com; yfu@hkmu.edu.hk; pwang@hkmu.edu.hk).

Y. Guo was with the School of Science and Technology,  Hong Kong Metropolitan University, Hong Kong, 999077, China (e-mail: yongnaguo2-c@my.cityu.edu.hk).

W. Sun is with the School of Cybersecurity, Northwestern Polytechnical University, 127 West Youyi Road, Xi’an Shaanxi, 710072, China (e-mail: sunwen@nwpu.edu.cn)

  Y. Zhang is with the Department of Informatics, University of Oslo (e-mail: yanzhang@ieee.org). 
		
	}
}

\maketitle

\begin{abstract}
Digital twins (DTs) have emerged as a promising enabler for representing the real-time states of physical worlds and realizing self-sustaining systems. In practice, DTs of physical devices, such as mobile users (MUs), are commonly deployed in multi-access edge computing (MEC) networks for the sake of reducing latency. To ensure the accuracy and fidelity of DTs, it is essential for MUs to regularly synchronize their status with their DTs. However, MU mobility introduces significant challenges to DT synchronization. Firstly, MU mobility triggers DT migration which could cause synchronization failures. 
Secondly, MUs require frequent synchronization with their DTs to ensure DT fidelity. Nonetheless, DT migration among MEC servers, caused by MU mobility, may occur infrequently. Accordingly, we propose a two-timescale DT synchronization and migration framework with reliability consideration by establishing a non-convex stochastic problem to minimize the long-term average energy consumption of MUs. 
We use Lyapunov theory to convert the reliability constraints and reformulate the new problem as a partially observable Markov decision-making process (POMDP). 
Furthermore, we develop a heterogeneous agent proximal policy optimization with Beta distribution (Beta-HAPPO) method to solve it. Numerical results show that our proposed Beta-HAPPO method achieves significant improvements in energy savings when compared with other benchmarks.
\end{abstract}

\begin{IEEEkeywords}
	Digital twin (DT), heterogeneous agent proximal policy optimization (HAPPO), DT migration, multi-access edge computing (MEC), resource allocation, DT synchronization.
\end{IEEEkeywords}

\section{Introduction}
\label{sec:introduction}
The rapid development of the Internet of Things (IoT) and wireless communication technologies has spurred the emergence of numerous new applications and services, such as smart cities, multidimensional sensing, and ubiquitous intelligence \cite{Yang2019IN_6G, Letaief2022JSAC, yfu2023}. 
However, these new applications and services pose new demands and challenges on the network, which may include ubiquitous connectivity, ultra-reliable and low-latency communications, and massive device connectivity. 
To address these challenges, it is essential to develop a self-sustaining paradigm and a proactive, online learning-based wireless networks \cite{Wu2021IOTJ_Digital,Wang2022TCOMR}. 

Wireless digital twin networks (WDTNs) are designed to effectively address these challenges.
WDTN is defined as digital frameworks based on digital twin (DT) technology, creating high-fidelity virtual models of physical wireless network. This allows for real-time monitoring, simulation, and analysis to enhance connectivity and performance across various applications and services. This innovation is anticipated to be a driving force in the digitalization of the physical world, facilitating seamless communication and integration between the physical and virtual domains\cite{Tang2022OJCOMS,Wang2023IOTJ1,Gu2024TVT}. 
As a high-fidelity digital replica of the physical network throughout its lifecycle, WDTN plays a crucial role in ensuring network performance and reliability.
DT-driven networks continuously monitor, simulate, and analyze the physical world, allowing for ongoing assessment and adjustment of the efficiency and effectiveness of quality of service policies. By leveraging these capabilities, WDTN can optimize and predict the behavior of physical systems, ensuring optimal performance and promoting proactive decision-making within the network ecosystem. This predictive and optimization capability provides network operators with unprecedented management and maintenance capabilities, significantly enhancing network operational efficiency and user satisfaction \cite{Lu2021IOTJ, Alcaraz2022COMST_Digital}. 
Moreover, the application of advanced machine learning (ML) techniques in analyzing digital networks empowers DTs to significantly enhance optimization and decision-making processes in physical networks \cite{ Wang2022IoT_Mobility, Li2024TMC}.

WDTN is a system composed of three layers, namely, the physical layer, digital twin  layer, and application layer. The physical layer typically consists of terminal devices and communication infrastructure. The DT layer is responsible for creating and maintaining virtual replicas of the physical layer's components to enable real-time monitoring and analysis. The application layer handles the implementation of advanced functions and services, such as data analysis and decision support. This three-layer architecture allows the WDTN to efficiently integrate and coordinate interactions between the physical and digital worlds \cite{Duong2023MWC}.
To establish WDTN, it is crucial to carefully plan the deployment of DTs within the network infrastructure.
While deploying DTs in remote clouds is possible, it can introduce significant communication delays and compromise fidelity. The emergence of multi-access edge computing (MEC) networks, which leverage multiple edge servers for communication and computation with end devices, offers a realistic solution.
Deploying DTs in MEC networks helps alleviate communication delays \cite{Duong2023MWC,Tang2022OJCOMS, Zhang2023JSAC}.

Moreover, DT in MEC networks can also facilitate decision-making processes related to computation offloading and resource allocation through real-time interaction between the digital world and the physical space, giving rise to DT-assisted MEC\cite{Khan2022MCOM, Kurma2023TCOM}.
Several notable research studies have explored the role of DTs in optimizing different aspects of MEC networks\cite{ Dai2021TII,Li2022TVT, Van2022TCOM, Van2023JSAC, Xu2023JSAC,Liu2022IOTJ}. 
Specifically, the work \cite{Dai2021TII} applied ML techniques in DT to enhance the design of the edge computing networks. By mirroring the dynamic network in DT, ML is applied to make decision on real-time resource allocation, which effectively improve energy efficiency and data processing efficiency.
In \cite{Li2022TVT}, an adaptive aerial edge computing network utilizing DT was proposed, wherein the real-time environmental predictions were leveraged through DT to introduce intelligent offloading strategies for network resource allocation, leading to a significantly reduced network energy consumption.
In \cite{Van2022TCOM} and \cite{Van2023JSAC}, the authors explored ultra-reliable and low-latency communications (URLLC) scenarios within DT-assisted MEC networks. By modeling the edge network with DT, they effectively achieved the goal of minimizing delays under the stringent transmission requirements of URLLC.
In \cite{Xu2023JSAC}, a DT-driven edge-collaborative scheduling algorithm was proposed for delay-sensitive tasks in MEC networks. By effectively coordinating computation and communication resources, the devised method can significantly reduce task completion times.
Additionally, in \cite{Liu2022IOTJ}, the authors integrated DT with blockchain to improve MEC network security. They proposed an intelligent task offloading algorithm to reduce delay and energy consumption while ensuring data security.

DTs are complete replicas of physical devices in the digital space, including hardware configurations, historical running data, and real-time status information \cite{Lu2021IOTJ}. To ensure that DTs accurately reflect the current state of their physical counterparts, a synchronization process is essential. For effective synchronization, DTs require real-time data exchange. The types of data involved typically include observed data such as signal strength, network load, and device status, user data like service usage patterns and preferences, and environmental data that encompasses factors such as geographic location and local interference sources. The quantity of data required for synchronization can be substantial, as it needs to capture the comprehensive state of the network and each connected device at any given moment \cite{Guo2024MWC,Yu2024IOTM,Han2023JIOT}.
Moreover, the process of collecting a large amount of data to maintain synchronization between the DT and its physical object is considered both compute-intensive and data-intensive. Maintaining synchronization of the DT consumes a significant amount of computation resources. Therefore, effective resource allocation is crucial to achieve efficient synchronization between the DT and its physical object \cite{Li2024TMC}.

Furthermore, it is crucial to carefully consider the deployment of DTs among the edge servers and ensure real-time synchronization between the physical devices and their corresponding DTs. This attention to detail is essential to guarantee the reliability and accuracy of DT networks. 
The work \cite{Lu2021IOTJ,Vaezi2023IOTJ,Li2024TMC,Zheng2023TWC,Chukhno2022IOTJ} examined the DT deployment problems to satisfy different requirements.
Specifically, in \cite{Vaezi2023IOTJ}, the authors devised a model for DT synchronization that addresses the DT deployment problem, to minimize the age of information associated with DT data.
The authors in \cite{Li2024TMC} proposed a DT synchronization model based on continual learning. This model optimizes computation resources to maximize the total utility gain of the DT network, thereby enhancing the accuracy of digital models. 
The aforementioned studies \cite{Vaezi2023IOTJ,Li2024TMC} have not considered user mobility and networks dynamics, which may significantly impact the synchronization of DTs. 
To address this challenge, the authors in \cite{Zheng2023TWC} investigated DT synchronization problem in vehicular networks, which analyzed the association between DT and wireless access points to ensure low-latency synchronization.
Furthermore, in \cite{Chukhno2022IOTJ}, the authors took into consideration social characteristics, computation resources, and mobile user (MU) mobility when analyzing DT deployment. They proposed an approximation algorithm to minimize the synchronization delay. 
Moreover, it is noteworthy that to meet the stringent requirements for reliability and low latency, DT migration can be triggered due to MU mobility. 
To address this issue, the authors in \cite{Lu2021IOTJ} investigated DT deployment and migration using ML techniques. They aimed to maximize MU utility by achieving a balance between latency and energy costs.

While previous studies have explored DT synchronization and migration resulting from MU mobility, the issue of DT synchronization failure caused by DT migration has not been taken into account.
We acknowledge the importance of MU mobility and its potential impact on DT migration and subsequent synchronization failures, which calls for further research. Additionally, to ensure the accuracy of DTs, physical devices must frequently synchronize data with their corresponding DTs, necessitating synchronization on a small timescale. 
However, when operating on the same timescale, MUs located at the network edge may frequently enter and exit the network, which can increase the volatility of the network and potentially lead to frequent migrations of the DTs. Additionally, the migration of the DT requires transferring all the data contained within the DT and reconfiguration after the migration. Such frequent migrations not only consume substantial resources but also cause frequent service interruptions, thereby hindering the effective synchronization of DTs. Therefore, to minimize network overhead and reduce the likelihood of synchronization failures, it is of high necessity to plan and execute DT migrations on a larger timescale. By planning and executing DTs migration on a larger timescale, resources can be managed more effectively, interruptions can be reduced, thereby enhancing the reliability of the network.
Hence, this paper aims to tackle the challenges associated with synchronization and migration within DT networks. To achieve this goal, we present a novel two-timescale framework, focusing on minimizing 
long-time average energy consumption while adhering to a specified synchronization failure constraint. The main contributions of this work are summarized below:

\begin{itemize}
    \item  
   We propose a two-timescale framework that incorporates both small timescale DT synchronization and large timescale DT migration.
   In addition, we consider the possibility of DT synchronization failure during the migration process. 
   We then jointly optimize DT deployment, communication, and computation resources, while minimizing long-term average energy consumption under the constraints of DT synchronization reliability.
    \item  
    The formulated problem is challenging to solve due to its non-convex and stochastic nature. 
    Therefore, we exploit the multi-agent deep reinforcement learning (MADRL) method to tackle this problem.
    To utilize MADRL, we first transform the DT reliability constraints into tractable constraints using the Lyapunov optimization theory. Then, we reformulate the newly generated optimization problem as a partially observable Markov decision process (POMDP). The MADRL framework with heterogeneous agents is applied to decompose the high-dimensional state and action spaces. Whilst, the heterogeneous agent proximal policy optimization (HAPPO) method is used to determine the optimization strategy of the network.
    \item 
    Our two-timescale model faces training challenges due to varying sampling efficiencies among agents. The control center operates on a large timescale with lower sample efficiency, which slows training. To mitigate this, we enhance training efficiency by collecting additional rewards for the control center on a smaller timescale without impacting other agents. Additionally, the use of Gaussian distribution in the action outputs introduces bias due to the bounded action space. Therefore, we utilize a Beta distribution in the actor network for policy output.  
    Extensive numerical results show that the HAPPO with Beta distribution (Beta-HAPPO) algorithm possesses efficient convergence and effectively minimizes energy consumption.
\end{itemize}

The following sections of this paper are organized as follows.  Section \ref{s:sys} details the system model and introduces the long-term average energy minimization problem under consideration. In Section \ref{s:mdp}, the original problem is transformed by Lyapunov optimization theory and described as a POMDP.
Section \ref{s:proposed} elaborates on the proposed Beta-HAPPO distributed MADRL algorithm.
Numerical evaluations and discussions are provided in Section \ref{s:simulation}. Finally, we conclude this paper and predict future research directions in Section \ref{s:conclusion}.

\begin{figure}[t]
	\centering
	\includegraphics[width=0.5\textwidth]{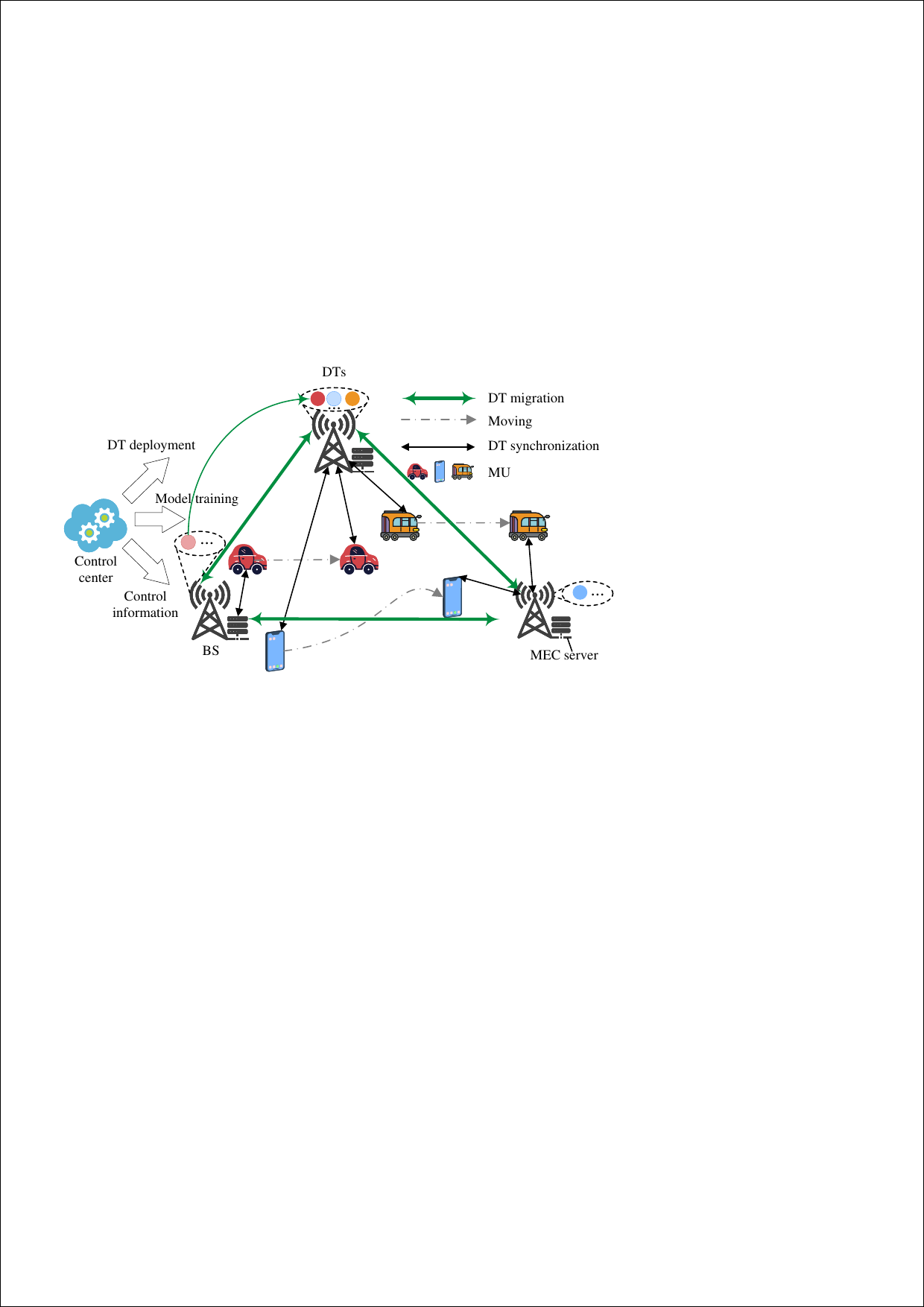}
 \vspace{-2em}
	\caption{ An illustration of the DT-empowered MEC network.}
 %\caption{System model}
	\label{fig:sys}
         \vspace{-1.5em}
\end{figure}

\section{System Model and Problem Formulation}\label{s:sys}
We consider a DT-empowered MEC network as shown in Fig. \ref{fig:sys}, which consists of a control center, $M$ single-antenna base stations (BSs) and $K$ single-antenna MUs. The control center is used to maintain the model training and the deployment decision of DTs. Moreover, each BS is equipped with an MEC server to assist MUs in constructing and synchronizing their DTs, and  DTs of MUs have been initialized in MEC servers. We define the MU set as $\mathcal{K}=\{1,2,\ldots,K\}$ and the set of BSs as $\mathcal{M}=\{1,2,\ldots,M\}$.
The network functions by dividing time into slots, where each time slot has length $\delta_t$ that is equal to the coherent time of the wireless channel.
The set of time slots is represented by $\mathcal{N}=\{0,1,2,\ldots,n,\ldots\}$.
The position of MU $k$ during time slot $n$ within the Cartesian plane is expressed as ${\bf u}_k[n]=[x_{u,k}[n],y_{u,k}[n]]^{\text{T}}$, and BS $m$ is ﬁxed at location ${\bf q}_m=[x_{b,m},y_{b,m}]^{\text{T}}$. In this paper BS $m$ and MEC server $m$ are used substitutively for each other.

Consider that the movement of MUs typically exhibits temporal dependency, namely, their current position and speed are dependent on their previous states.
 The Gaussian-Markov model, a stochastic process with memory, effectively captures this time-related dependency and is straightforward to implement. 
 Therefore, we employ the Gaussian-Markov model to characterize the mobility of MUs\cite{Liang2003TNET,Batabyal2015COMST,Liu2020TVT}. Within time slot $n$, the velocity $v_k[n]$ and the movement direction $\theta_k[n]$ of MU $k$ are determined by the subsequent equations:
\begin{align}
	 & v_k[n] = \mu_1 v_k \left[ n-1 \right] + \left( 1- \mu_1 \right) \bar s + \sqrt{1 - \mu_1^2} \Phi_k,                \\
	 & \theta_k [n] = \mu_2 \theta_k \left[ n -1 \right]+ \left( 1- \mu_2 \right) \bar \theta + \sqrt{1- \mu_2^2} \Psi_k,
\end{align}
where $\bar s$ is the average speed of MU, $\bar \theta$ is the average moving direction of MU, $0\leq \mu_1,\mu_2\leq 1$ is the influence of MUs' previous state on the current state, $\Phi_k$ and $\Psi_k$ follow the Gaussian distribution satisfying the mean and variance of $(\bar \xi_{v_k}, \zeta_{v_k}^2)$ and $(\bar \xi_{\theta_k}, \zeta_{\theta_k}^2)$, respectively. Therefore, MU $k$' location is updated as follows:

\begin{align}
	 & x_{u,k} [n] = x_{u,k} \left[ n-1 \right] + v_k \left[ n -1 \right] \cos \left( \theta_k \left[ n-1\right]\right) \delta_t, \\
	 & y_{u,k} [n] = y_{u,k} \left[ n-1 \right] + v_k \left[ n -1 \right] \sin \left( \theta_k \left[ n-1\right]\right) \delta_t.
\end{align}

\subsection{Two-Timescale Operation Model}
To ensure the real-time performance of the DT, the MEC servers in BSs require frequent synchronization of information from the MUs. Furthermore, the MUs may switch their BS association after moving for a period of time, making it necessary to migrate the DT. 
Meanwhile, to avoid dense network overhead, we consider DT migration over a large timescale. Thus, we adopt a two-timescale operation model to represent the different frequencies of DT synchronization and DT migration.

As shown in Fig. \ref{fig:timescale}, $T$ sequential time slots are aggregated to form a single time frame. We define the set of time frames as $\mathcal{Q}=\{0,1,2,\ldots,q,\ldots,Q,\ldots\}$, and for the $q$-th frame we denote the collection of time slots as $\mathcal{T}_q=\{qT,qT+1,\ldots,(q+1)T-1\}$. Accordingly, the time frames for DT migration are large timescales, and the time slots for DT synchronization are small timescales. The assumption for timescales is illustrated as follows:
\begin{itemize}
	\item Small timescale: The MUs may require DT synchronization during each time slot. Moreover, in order to maintain DT, joint control of the transmission power of MUs, transmission rate of the BS, and computation resource allocation of the MEC server need to be implemented.
	\item Large timescale: To maintain real-time performance of DT synchronization, DT may migrate themselves from one MEC server to another according to the decisions of the control center at the start of each frame.
\end{itemize}

\begin{figure}[t]
	\centering
	\includegraphics[width=0.5\textwidth]{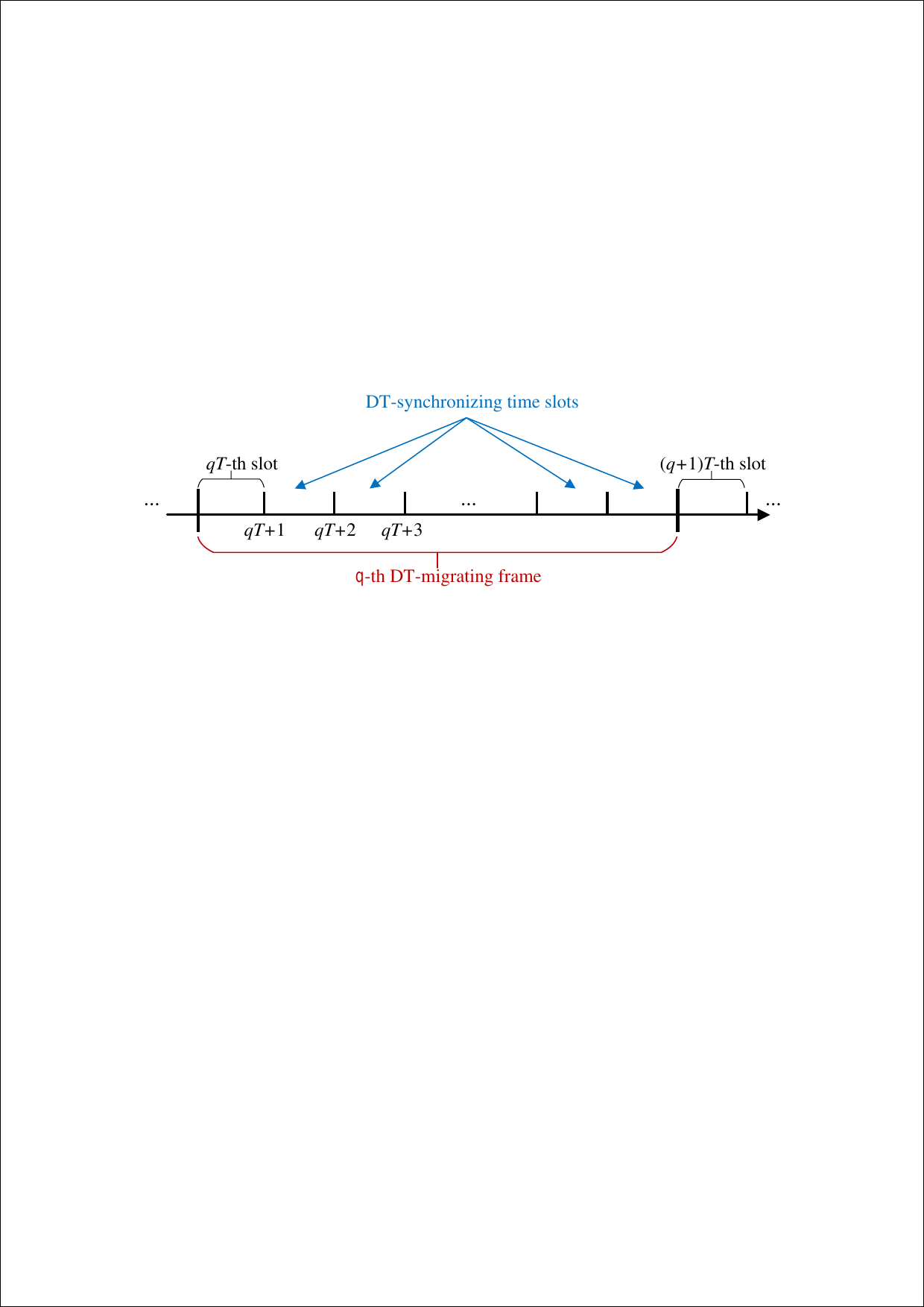}
    \vspace{-2em}
	\caption{An illustration of the two-timescale model of DT migration and DT synchronization.}
	\label{fig:timescale}
     \vspace{-1.5em}
\end{figure}

\subsection{Communications Model}
\subsubsection{MU-to-BS transmission model}
For the uplink transmission, MU $k$ is associated with one BS. The association between MU $k$ and BS $m$ during time slot $n$ is indicated by $\alpha_{k,m}[n]$.  
The transmission link quality directly correlates with the proximity between the MU and the BS, incentivizing MU $k$ to link with the nearest BS for optimal transmission quality.
Therefore, if MU $k$ is closest to BS $m$, we have $\alpha_{k,m}[n] = 1$, otherwise $\alpha_{k,m}[n] = 0$. The transmit power of MU $k$ at time slot $n$ is represented by $p_k[n]$.
In practical implementation, the channel that exists between the MU and the BS is typically modeled by a Rician fading channel, which is characterized as follows \cite{Hua2021TWC}:
\begin{equation}
	{h}_{k,m}[n] = \sqrt{\frac{\rho_0}{d_{k,m}^{\phi}[n]} }\left( \sqrt{ \frac{\kappa}{\kappa +1} } {\bar h}_{k,m}[n] +\sqrt{\frac{1}{\kappa +1}} { \hat h}_{k,m} [n] \right),
\end{equation}
where $\rho_0$ is indicative of the channel power gain at a benchmark distance of 1 meter. 
The path loss exponent is given by $\phi \geq2 $.
We define $d_{k,m}[n]=\Vert {\bf q}_m [n] - {\bf u}_k [n] \Vert_2$ as the Euclidean distance between MU $k$ and BS $m$. 
$\kappa$ reflects the Rician factor that indicates the ratio of power between the direct and indirect signal components. 
${\bar h}_{k,m}[n] \in \mathbb{C}$ corresponds to the line-of-sight (LoS) channel component, where ${\bar h}_{k,m}[n]=1$. ${\hat h}_{k,m}[n] \in \mathbb{C}$ pertains to the non-line-of-sight (NLoS) channel component and follows a complex Gaussian distribution with a mean of zero and a variance of one, denoted by ${\hat h}_{k,m}[n] \sim \mathcal{CN}(0,1)$.

Let the transmit signal symbol of MU $k$ be $x_k[n]\sim \mathcal{CN}(0,1)$, the signal received at BS $m$ from MU $k$ is expressed as
\begin{align}
	 & \nonumber {y}_{k,m}[n] = \alpha_{k,m}[n] \sqrt{p_k[n] }{h}_{k,m}[n] x_k[n] + \\&\quad\sum\limits_{i=1,i \ne k}^K{ \sum\limits_{m=1}^M {\alpha_{i,m}[n] {h}_{i,m}[n] \sqrt{p_i[n]}x_i[n]}}+ {n}_0[n],
\end{align}
where $n_0[n] \in \mathbb{C}$ is the additive white Gaussian noise, with $n_0[n] \sim \mathcal{CN} \left( 0,\sigma^2 \right)$. The signal-to-interference-plus-noise ratio (SINR) received at BS $m$ is obtained as
\begin{equation}
	\gamma_{k,m}[n] = \frac{ p_k[n] \lvert {h}_{k,m}[n] \rvert^2 }{\sum\limits_{i=1,i \ne k}^K{ \sum\limits_{m=1}^M {\alpha_{i,m}[n] \lvert { h}_{i,m}[n] \rvert^2 p_i[n] } }+\sigma^2}.
\end{equation}

Accordingly, the transmission rate of MU $k$ is represented as
\begin{equation}
	R_{k}[n] = \sum_{m=1}^{M}\alpha_{k,m}[n]B \log_2 \left( 1+\gamma_{k,m}[n] \right),
\end{equation}
where $B$ is the communication bandwidth.

\subsubsection{BS-to-BS transmission model}
In the scenario where the DT of MU $k$ is deployed on MEC server $j$, and the MU's associated BS $i$ satisfies $i\neq j$, the BS $i$ must facilitate the transfer of the DT synchronization data for MU $k$ to BS $j$.  
We introduce a binary variable $\zeta_{k,i,j}[n]$ to represent the relationship between the BS $i$ associated with MU $k$ and the BS $j$ where the DT of MU $k$ is deployed. 
When $\zeta_{k,i,j}[n]=1$ and $i \neq j$, it means that MU $k$ is associated with BS $i$, and the DT of MU $k$ is deployed at BS $j$. This implies that wired communication resources are allocated from BS $i$ to BS $j$ for MU $k$. Otherwise, $\zeta_{k,i,j}[n]=0$.
The wired link connection is established between servers at BS $i$ and BS $j$, and $w_{i,j}$ represents the total transmission rate from BS $i$ to BS $j$. $w_{i,j,k}[n]$ is defined as the allocated transmission rate to transmit MU $k$'s data from BS $i$ to BS $j$. This allocation complies to the constraint that $0\leq \sum_{k=1}^{K} \zeta_{k,i,j}[n]w_{i,j,k}[n]\leq w_{i,j}$. 
For simplicity, the wired transmission rate to transfer the data of MU $k$ between BSs is given by $w_k[n]=\sum_{i=1}^{M}\sum_{j=1}^{M} \zeta_{k,i,j}[n] w_{i,j,k}[n]$.

\subsection{DT Synchronization Model}\label{subsec:syn}
Denote $\beta_k(q)\in\mathcal{M}$ as the MEC server where the DT of MU $k$ is deployed. Each MU may request to synchronize its information with the DT during each time slot, and the generation of requests of each MU follows the Bernoulli distribution with success probability $\lambda_k$, which represents the request generation rate of MU $k$. 
Define $a_k[n]$ as a binary indicator, such that $a_k[n]=1$ signifies that MU $k$ initiates a request during time slot $n$, while $a_k[n]=0$ indicates no request is made. Consequently, the probability that MU $k$ makes a request is denoted by ${\rm Pr}(a_k[n]=1)=\lambda_k$, and the probability of not making a request is ${\rm Pr}(a_k[n]=0)=1-\lambda_k$. 
Specifically, the DT synchronization information of MU $k$'s request at each time slot $n$ can be described as $\Omega_k[n]=(D_k[n], C_k[n],\tau_k[n])$, where $\tau_k[n]$ denotes the latency constraint that needs to be met at the start of the time slot.
To maintain prompt synchronization of DTs, it is stipulated that the synchronization delay must not surpass the duration of a single time slot, i.e. $\tau_k[n]<\delta_t$.
$D_k[n]$ and $C_k[n]$ denote the size of DT synchronized data and the average count of central process unit (CPU) cycles needed to process one bit of data for MU $k$, respectively.
When MU $k$ is associated with BS $m^\prime$, i.e., $\alpha_{k,m^\prime}[n]=1$, the DT synchronization delay for MU $k$ is determined by the following calculation:
\begin{equation}\label{eq:tk}
    t_k[n]=\begin{cases}
        \frac{D_k[n]}{R_k[n]}+\frac{D_k[n]C_k[n]}{f_{k,\beta_k(q)}[n]},                                       ~\quad\qquad\text{$\alpha_{k,m^\prime}[n]=\alpha_{k,\beta_k(q)}[n]$},    \\
        \frac{D_k[n]}{R_k[n]}+\frac{D_k[n]}{w_k[n]}+\frac{D_k[n]C_k[n]}{f_{k,\beta_k(q)}[n]},  \text{$\alpha_{k,m^\prime}[n]\neq\alpha_{k,\beta_k(q)}[n]$},
    \end{cases}
\end{equation}
where $f_{k,\beta_k(q)}[n]$ is the computing frequency allocated to MU $k$ by MEC server $\beta_k(q)$. \eqref{eq:tk} indicates that when the BS associated by MU $k$ is the same as the MEC server where the DT of MU is deployed, i.e., $\alpha_{k,m^\prime}[n]=\alpha_{k,\beta_k(q)}[n] $, the DT synchronization delay $t_k[n]$ for consists only of MU $k$'s transmission delay and the processing delay at MEC server $\beta_k(q)$, otherwise the transmission delay between BSs for the MU's data needs to be considered, i.e., $D_k[n]/w_k[n]$. Considering the limited computing capability per MEC server, we have the following constraints
\begin{equation}
	0\leq f_{k,\beta_k(q)}[n]\leq f_{\beta_k(q)}^{\max} ,
\end{equation}
\begin{equation}
	0\leq \sum_{k=1}^{K}f_{k,m}[n]\leq f_{m}^{\max},
\end{equation}
where $f_{m}^{\max}$ is the maximum computing frequency of the MEC server $m$.

\subsection{DT Migration Model}\label{subsec:migr}
The DT migration may occur when $ \beta_k(q)\neq\beta_k(q-1)$. If the DT migration of MU $k$ is decided, $G_k$ consecutive time slots will be utilized for DT migration, where $0\leq G_k<T$, and the collection of time slots allocated for DT migration is defined as  $\mathcal{T}_q^{G_k}\triangleq\{qT,qT+1,\ldots,qT+G_k-1\}$. Corresponding to this, we have $\mathcal{T}_q^{G_k}=\{\emptyset\}$ when the DT migration of the MU is not triggered. Note that the synchronization of MU is not allowed during the migration of the MU's DT.

However, the occurrence of synchronization failure cannot be entirely prevented due to migration delay of DT and wireless channel fluctuations. 
More specifically, when a MU initiates a synchronization request, it may encounter failure under two conditions. Firstly, if its DT is undergoing migration, it may be unable to establish synchronization successfully. Secondly, if the synchronization time required exceeds the specified latency requirement, the synchronization request is deemed unsuccessful.
To quantify this variable, we establish the DT synchronization failure factor $X_k[n]$, which takes a value from the set $\{0,1\}$ for each time interval $n$. If $X_k[n]$ is assigned a value of 1, it signifies that a synchronization attempt has failed at time slot $n$. Conversely, a value of 0 indicates a successful synchronization.
The synchronization failure factor $X_k[n]$ is mathematically articulated as follows:
\begin{equation}
	X_k[n]=\begin{cases}
		\mathbb{I}(a_k[n]=1),                  & \text{if $n\in\mathcal{T}_q^{G_k}$},                         \\
		\mathbb{I}(a_k[n]=1,t_k[n]>\tau_k[n]), & \text{if $n\in\mathcal{T}_q\backslash \mathcal{T}_q^{G_k}$},
	\end{cases}
\end{equation}
where $\mathbb{I}(\cdot)=1$ if DT synchronization fails, and $\mathbb{I}(\cdot)=0$ otherwise.

\subsection{Problem Formulation}
Considering that the MUs usually have limited energy resources that are difficult to replenish, it is advantageous to maintain the DTs at the BS, which have abundant power supply.  In addition, the main energy consumption of MU is caused by transmission power, and optimizing transmission energy consumption is beneficial for maintaining DT synchronization for long time. The energy consumption of MU $k$ can be quantified using the following expression:
\begin{equation}
	E_k[n]=p_k[n]\frac{D_k[n]}{R_{k}[n]}.
\end{equation}
%Hence, 
Particularly, we aim at minimizing the long-term energy consumption of the MUs by jointly optimizing the transmission power of MUs ${\bf p}\triangleq\{p_k[n], k\in \mathcal{K},n \in \mathcal{N}\}$, the transmission rate between BSs ${\bf w}\triangleq\{w_{i,j,k}[n], i\neq j, i,j \in \mathcal{M},k \in \mathcal{K},n \in \mathcal{N}\}$, the DT deployment ${\bm \beta}\triangleq\{\beta_k(q), k \in \mathcal{K},  q \in \mathcal{Q}\}$, and the computation resource allocation at MEC servers ${\bf f}\triangleq\{f_{k,m}[n], k \in \mathcal{K}, m \in \mathcal{M},  n \in \mathcal{N}\}$. Building upon the previously stated definitions, the formulation of the optimization problem is as follows:
\begin{subequations}\label{P0}
	\begin{align}
		\mathop {\min } \limits_{{\bm \beta},{\bf p,w,f}} & \lim\limits_{Q\rightarrow\infty} \frac{1}{QKT}  \left( { \sum\limits_{q=0}^{Q-1} \sum\limits_{n\in\mathcal{T}_q} \sum\limits_{k=1}^K \mathbb{E} \{E_k[n]\}}\right) \label{P0:ob} \\
		\text{s.t.}
		& ~ \lim\limits_{Q\rightarrow\infty} \frac{1}{QT} \sum\limits_{q=0}^{Q-1} \sum\limits_{n\in\mathcal{T}_q}  \mathbb{E}\{X_k[n]\}\leq\varepsilon,k \in \mathcal{K},\label{P0:rel}    \\
		& \beta_k(q)\in \mathcal{M}, k \in \mathcal{K}, q \in \mathcal{Q},\label{P0:beta}                                                                                                \\
		& 0 \le p_k[n] \le P_{k}^{{\rm{max}}}, k \in \mathcal{K}, n \in \mathcal{N},\label{P0:p}                                                                                         \\
		& \nonumber0\leq w_{i,j,k}[n]\leq w_{i,j} ,                                                                                                                                      \\&\quad i\neq j, i, j \in \mathcal{M},  k \in \mathcal{K},n \in \mathcal{N},\label{P0:wi}\\
		& \nonumber 0\leq \sum_{k=1}^{K}\zeta_{k,i,j}[n]w_{i,j,k}[n]\leq w_{i,j},                                                                                                                         \\&\quad i\neq j, i,j \in \mathcal{M},n \in \mathcal{N},\label{P0:wis}\\
		& \nonumber0\leq f_{k,\beta_k(q)}[n]\leq f_{\beta_k(q)}^{\max},                                                                                                                  \\&\quad k \in \mathcal{K}, q\in\mathcal{Q},n\in\mathcal{N},\label{P0:fkt}\\
		& 0\leq \sum_{k=1}^{K}f_{k,m}[n]\leq f_{m}^{\max},  m \in \mathcal{M}, n\in\mathcal{N}\label{P0:fk},
	\end{align}
\end{subequations}
where  $\mathbb{E}\{\cdot\} $ represents the expectation, which is taken over the MU mobility, DT synchronization request, and channel randomness. $\varepsilon$ is the threshold cap of the DT synchronization failure ratio. 
$P_{k}^{{\rm{max}}}$ is the maximum transmission power of the MU $k$.
\eqref{P0:rel} limits the long-term failure ratio of DT synchronization.
\eqref{P0:beta} is the association factor between the MU $k$ and the MEC server $m$.
\eqref{P0:p} limits the transmission power of the MU $k$. 
\eqref{P0:wi} and \eqref{P0:wis} limit the wired transmission rate from BS $i$ to BS $j$.
\eqref{P0:fkt} and \eqref{P0:fk} are the allocation of computation resources for the MEC server.

That problem \eqref{P0} is a nonlinear integer programming problem due to the nonconvexity of objective function and the existence of integer variables.
This problem becomes even more complex to capture the real-time decision-making mechanism since it involves the mobility of MUs, DT synchronization requests, and channel randomness. Specifically, in the absence of prior knowledge about the dynamic channel conditions and the network environment, traditional offline algorithms struggle with rendering real-time decisions to arrive at a solution for the problem. This is because the typical offline optimization algorithm needs to know all the state information of the network before solving the optimization problem. Therefore, traditional iterative offline algorithms make it hard to solve the problem timely. 
As an ML method, MADRL is capable of interacting and learning from the environment and finally obtains a policy model that can be deployed on the devices, thereby facilitating real-time decisions and meeting long-term benefits according to the current state.

\section{Problem Transformation and POMDP Formulation} \label{s:mdp}
To address the long-term reliability constraint present in problem \eqref{P0}, we will first reformulate it utilizing the  Lyapunov optimization theory.
Next, we formulate the problem as a POMDP with multiple agents to make it suitable for MADRL.

\subsection{Problem Transformation}
Implementing conventional MADRL techniques to the problem presented is complex due to the complications involved in addressing the long-term reliability constraint \eqref{P0:rel} highlighted in \cite{Wu2021TII, Xu2022JSAC}.
Therefore, we utilize the virtual queue method combined with Lyapunov optimization theory to reformulate our problem \cite{neely2010stochastic}. Therefore, to satisfy the reliability constraint \eqref{P0:rel}, we create a virtual queue vector ${\bf Y}[n]=\left[Y_1[n], Y_2[n],\dots, Y_k[n]\right]^{\rm T}$ that models the DT synchronization reliability conditions for each MU. The variable $Y_k[n]$ changes according to the following rule:
\begin{equation}
	Y_k[n+1]=\left[Y_k[n]+X_k[n]- \varepsilon \right]^{+}, \label{eq:que}
\end{equation}
where $[\cdot]^{+}=\max\{\cdot,0\}$. $Y_k[n]$ refers to the queue length at the time slot $n$, initialized as $Y_k[0]=0$, which measures the extent by which the current backlog of synchronization failures surpasses the threshold $\varepsilon$.
Utilizing Lyapunov optimization theory, the reliability constraint \eqref{P0:rel} can be transformed into a mean-rate stability condition for the virtual queue, such that $\lim \limits_{n\rightarrow\infty} {\mathbb E}\{Y_k[n]\}/n=0$.

Then, we define the Lyapunov function $\mathcal{L}({\bf Y}[n]) = \frac{1}{2}{\bf Y}^{\rm T}[n]{\bf Y}[n]$ to assess how well the system adheres to the reliability constraints. 
A low value of this function indicates a reliable and stable system, while a high value suggests potential reliability issues, requiring corrective measures for improved performance.
Moreover, the conditional Lyapunov drift over a $T$-slot duration, also referred to as a frame, is denoted by $\Delta \mathcal{L}[n]=\mathbb{E}\left\{\mathcal{L}({\bf Y}[n+T])-\mathcal{L}({\bf Y}[n])\vert {\bf Y}[n] \right\}$, and the upper bounds for this are provided by the two lemmas presented below.
\begin{lemma}
	Define $\Delta \mathcal{L}_k[n]=\mathbb{E}\left\{\mathcal{L}(Y_k[n+T])-\mathcal{L}(Y_k[n])\vert Y_k[n] \right\}$, then $\Delta \mathcal{L}_k[n]$ can be upper bounded by
	\begin{equation} \label{eq:up1}
		\Delta \mathcal{L}_k[n]\leq B_{k,1}T+\mathbb{E}\left\{\sum \limits_{n\in\mathcal{T}_q}Y_k[n]\left[X_k[n]-\varepsilon \right]\bigg\vert Y_k[qT] \right\},
	\end{equation}
	where $B_{k,1}\triangleq \frac{1}{2}\left(\lambda_k+\varepsilon^2\right)$ is a constant.
\end{lemma}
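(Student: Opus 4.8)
The plan is to run the standard one-step Lyapunov drift argument on the per-user virtual queue $Y_k[\cdot]$ and then telescope it over the $T$ slots of frame $q$. First I would drop the projection $[\cdot]^{+}$ in the update rule \eqref{eq:que} using the elementary fact that $([x]^{+})^2\le x^2$ for every real $x$, which gives $Y_k^2[n+1]\le\left(Y_k[n]+X_k[n]-\varepsilon\right)^2$. Expanding the square and multiplying by $\tfrac12$ yields the per-slot bound
\[
\mathcal{L}(Y_k[n+1])-\mathcal{L}(Y_k[n])\le \tfrac12\left(X_k[n]-\varepsilon\right)^2+Y_k[n]\left(X_k[n]-\varepsilon\right).
\]

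Next I would control the ``noise'' term $\tfrac12(X_k[n]-\varepsilon)^2$ by the constant $B_{k,1}$ in expectation. Since $X_k[n]\in\{0,1\}$ we have $X_k^2[n]=X_k[n]$, hence $(X_k[n]-\varepsilon)^2=X_k[n](1-2\varepsilon)+\varepsilon^2\le X_k[n]+\varepsilon^2$ because $\varepsilon\ge 0$ and $X_k[n]\ge 0$. Moreover, by the definition of the failure factor a synchronization attempt can only fail when a request is actually issued, i.e. $X_k[n]\le\mathbb{I}(a_k[n]=1)$, so $\mathbb{E}\{X_k[n]\}\le\lambda_k$. Combining these two facts gives $\tfrac12\mathbb{E}\{(X_k[n]-\varepsilon)^2\}\le\tfrac12(\lambda_k+\varepsilon^2)=B_{k,1}$, independently of the channel and mobility randomness.

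Finally I would sum the per-slot inequality over $n\in\mathcal{T}_q=\{qT,\dots,(q+1)T-1\}$. The left-hand side telescopes to $\mathcal{L}(Y_k[qT+T])-\mathcal{L}(Y_k[qT])=\mathcal{L}(Y_k[n+T])-\mathcal{L}(Y_k[n])$ with $n=qT$; the $T$ noise terms contribute at most $B_{k,1}T$; and the remaining terms sum to $\sum_{n\in\mathcal{T}_q}Y_k[n]\left(X_k[n]-\varepsilon\right)$. Taking the conditional expectation given the backlog $Y_k[qT]$ at the start of the frame then delivers exactly \eqref{eq:up1}.

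There is essentially no obstacle here: this is the textbook drift bound of the Lyapunov optimization framework specialized to a Bernoulli arrival of rate $\lambda_k$ and a deterministic ``service'' rate $\varepsilon$. The only two points that deserve a line of care are (i) the inequality $\mathbb{E}\{X_k[n]\}\le\lambda_k$, which is what keeps the constant $B_{k,1}$ free of the intra-frame channel/mobility randomness, and (ii) keeping the conditioning consistent throughout, so that the expectation appearing in \eqref{eq:up1} is understood as being over the randomness within frame $q$ conditioned on $Y_k[qT]$.
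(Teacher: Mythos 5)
Your proposal is correct and follows essentially the same route as the paper's own proof: drop the projection via $\left([x]^{+}\right)^2\le x^2$, expand the square, bound the quadratic term using $X_k^2[n]=X_k[n]\le \mathbb{I}(a_k[n]=1)$ with $\mathbb{E}\{a_k[n]\}=\lambda_k$ (independent of $Y_k[qT]$), and sum over the frame with conditioning on $Y_k[qT]$. No gaps; your explicit handling of the $-2\varepsilon X_k[n]$ term is if anything slightly more careful than the paper's.
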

\begin{proof}
See Appendix A.
\end{proof}

The above lemma shows that the Lyapunov drift $\Delta \mathcal{L}_k[n]$ is upper bounded by a constant $B_{k,1}$ plus a term related to the queue length $Y_k[n]$ and the number of synchronization failures $X_k[n]$. However, it is difficult to directly minimize the term $\mathbb{E}\{\sum \limits_{n\in\mathcal{T}_q} (Y_k[n])[X_k[n]-\varepsilon]\mid Y_k[qT]\}$ at the beginning of each time frame. This is because the future values of $Y_k[n]$ over $n\in \{qT+1,\ldots,(q+1)T-1\}$ are unknown during the time slot $n=qT$ and are hard to be predicted due to the uncertain accumulation of synchronization failed in the queue \cite{Yao2014TPDS,Liang2022TWC}. In response to this matter, an upper bound for this expression will be set forth in the ensuing lemma.
\begin{lemma}
$\Delta \mathcal{L}_k[n]$ can be upper bounded by
	\begin{equation} \label{eq:up2}
		\Delta \mathcal{L}_k[n]\leq B_{k,2}T+\mathbb{E}\left\{Y_k[qT]\sum \limits_{n\in\mathcal{T}_q}\left[X_k[n]-\varepsilon \right]\bigg\vert Y_k[qT] \right\},
	\end{equation}
	where $B_{k,2}\triangleq B_{k,1}+(T-1)[(1-\varepsilon)\lambda_k+\varepsilon^2]/2$ is a constant.
\end{lemma}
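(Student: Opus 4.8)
The plan is to start from Lemma 1's bound and replace the problematic term $\mathbb{E}\{\sum_{n\in\mathcal{T}_q}Y_k[n][X_k[n]-\varepsilon]\mid Y_k[qT]\}$ with something depending only on $Y_k[qT]$, which is known at the start of the frame. The key observation is that the queue evolution \eqref{eq:que} implies $Y_k[n]$ cannot deviate too far from $Y_k[qT]$ within a single frame. Specifically, for any $n\in\mathcal{T}_q$ with $n = qT + \ell$ where $0\le \ell \le T-1$, iterating \eqref{eq:que} gives $Y_k[n] \le Y_k[qT] + \sum_{j=qT}^{n-1}(X_k[j]-\varepsilon) \le Y_k[qT] + \sum_{j=qT}^{n-1}X_k[j]$, since dropping the $[\cdot]^+$ and the $-\varepsilon$ terms only increases the bound. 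Therefore $Y_k[n] \le Y_k[qT] + \ell$ crudely, but more usefully $Y_k[n] - Y_k[qT] \le \sum_{j=qT}^{n-1}X_k[j]$.

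Next I would write $Y_k[n] = Y_k[qT] + (Y_k[n]-Y_k[qT])$ inside the sum, so that
\begin{equation}
\sum_{n\in\mathcal{T}_q}Y_k[n][X_k[n]-\varepsilon] = Y_k[qT]\sum_{n\in\mathcal{T}_q}[X_k[n]-\varepsilon] + \sum_{n\in\mathcal{T}_q}(Y_k[n]-Y_k[qT])[X_k[n]-\varepsilon].
\end{equation}
The first term on the right is exactly what appears in \eqref{eq:up2}, so it remains to bound the expectation of the second term by $(T-1)[(1-\varepsilon)\lambda_k+\varepsilon^2]/2$. For this I would bound $(Y_k[n]-Y_k[qT])[X_k[n]-\varepsilon]$ termwise. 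Using $X_k[n]\in\{0,1\}$: when $X_k[n]=1$ the factor $[X_k[n]-\varepsilon]=1-\varepsilon>0$ and $Y_k[n]-Y_k[qT]\le \sum_{j=qT}^{n-1}X_k[j] \le n-qT$, so this contributes at most $(1-\varepsilon)\sum_{j=qT}^{n-1}X_k[j]$; when $X_k[n]=0$ the factor is $-\varepsilon<0$ and $Y_k[n]-Y_k[qT]$ could be negative, but bounding below via $Y_k[n]\ge 0$ gives $(Y_k[n]-Y_k[qT])(-\varepsilon)\le \varepsilon Y_k[qT]$... this is the delicate point. A cleaner route is to note that regardless of sign, $(Y_k[n]-Y_k[qT])[X_k[n]-\varepsilon] \le |Y_k[n]-Y_k[qT]|\cdot\max\{1-\varepsilon,\varepsilon\}$ is too lossy; instead I expect the authors use that $X_k[n]-\varepsilon \le X_k[n]$ combined with $Y_k[n]-Y_k[qT]\le \sum_{j=qT}^{n-1}X_k[j]$ and then take expectations using independence of the Bernoulli request indicators, $\mathbb{E}\{X_k[n]X_k[j]\}\le \lambda_k$ for $j\ne n$ (and $\le\lambda_k$ for $j=n$ too), yielding $\mathbb{E}\{(Y_k[n]-Y_k[qT])[X_k[n]-\varepsilon]\} \le \sum_{j=qT}^{n-1}\mathbb{E}\{X_k[n](X_k[j]-\varepsilon)\}$, then summing $\ell=0$ to $T-1$ produces a $\binom{T}{2}$-type coefficient, i.e. the $(T-1)/2$ factor.

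The main obstacle I anticipate is getting the constant $B_{k,2}-B_{k,1} = (T-1)[(1-\varepsilon)\lambda_k+\varepsilon^2]/2$ exactly right, rather than just some constant times $T^2$; this requires carefully accounting for the double sum $\sum_{n\in\mathcal{T}_q}\sum_{j=qT}^{n-1}(\cdot)$ which has $\sum_{\ell=0}^{T-1}\ell = T(T-1)/2$ terms, dividing by... actually the $\frac{1}{2}$ and the per-term bound $(1-\varepsilon)\lambda_k + \varepsilon^2$ must come from pairing each $(n,j)$ contribution of $X_k[n]X_k[j]$ and $-\varepsilon X_k[n]$ symmetrically and using $\mathbb{E}\{X_k[n]\}=\lambda_k$. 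I would handle this by symmetrizing the double sum over unordered pairs $\{n,j\}$, bounding $\mathbb{E}\{X_k[n](X_k[j]-\varepsilon)\}+\mathbb{E}\{X_k[j](X_k[n]-\varepsilon)\} \le 2\lambda_k(1-\varepsilon)$... and tracking the $\varepsilon^2$ term which presumably arises from completing a square as in the proof of Lemma 1. Once the per-pair bound and the count $T(T-1)/2$ are pinned down, combining with Lemma 1 and the tower property of conditional expectation finishes the proof.
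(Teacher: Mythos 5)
Your opening move---writing $Y_k[n]=Y_k[qT]+(Y_k[n]-Y_k[qT])$ and bounding how far the virtual queue can drift inside one frame---is exactly the paper's idea, and your treatment of the $X_k[n]=1$ case plus the count $\sum_{\ell=0}^{T-1}\ell=T(T-1)/2$ is on target. The genuine gap is the term with the negative multiplier, $-\varepsilon\,(Y_k[n]-Y_k[qT])$, which you flag as ``the delicate point'' but never resolve. The missing ingredient is the \emph{lower} bound on the queue: since $Y_k[n+1]\ge Y_k[n]-\varepsilon$ in every slot, one has $Y_k[n]\ge Y_k[qT]-(n-qT)\varepsilon$ for $n\in\mathcal{T}_q$, hence $-\varepsilon\,(Y_k[n]-Y_k[qT])\le (n-qT)\varepsilon^2$. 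This is precisely where the $\varepsilon^2$ in $B_{k,2}$ comes from---not from ``completing a square as in Lemma 1.'' Pairing it with $X_k[n](Y_k[n]-Y_k[qT])\le (n-qT)(1-\varepsilon)X_k[n]$, taking conditional expectation with $\mathbb{E}\{X_k[n]\}\le\lambda_k$, and summing over the frame gives $\tfrac{1}{2}T(T-1)[(1-\varepsilon)\lambda_k+\varepsilon^2]$, which is exactly the paper's Appendix B argument (there stated as the two-sided bound $Y_k[qT]-(n-qT)\varepsilon\le Y_k[n]\le Y_k[qT]+(n-qT)(1-\varepsilon)$, applied separately to the $X_k[n]$ part and the $-\varepsilon$ part).

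Both workarounds you sketch for the delicate point fail. Bounding via $Y_k[n]\ge 0$ gives $-\varepsilon(Y_k[n]-Y_k[qT])\le\varepsilon Y_k[qT]$, which after summing contributes $T\varepsilon Y_k[qT]$; this is not a constant and would cancel the $-\varepsilon$ portion of $Y_k[qT]\sum_{n\in\mathcal{T}_q}[X_k[n]-\varepsilon]$, destroying the negative drift the lemma is designed to preserve. The ``cleaner route'' of using $X_k[n]-\varepsilon\le X_k[n]$ together with the upper bound $Y_k[n]-Y_k[qT]\le\sum_{j<n}X_k[j]$ is invalid: when $X_k[n]=0$ the factor $X_k[n]-\varepsilon$ is negative, so multiplying an upper bound on $Y_k[n]-Y_k[qT]$ by it reverses the inequality; the subsequent symmetrization over pairs and the bound $\mathbb{E}\{X_k[n]X_k[j]\}\le\lambda_k$ therefore do not produce a valid upper bound, and in any case would not generate the $\varepsilon^2$ term. (A smaller slip: your intermediate claim $Y_k[n]\le Y_k[qT]+\sum_{j}(X_k[j]-\varepsilon)$ does not survive the projection $[\cdot]^+$, though the weakened form you actually use, with the $-\varepsilon$ dropped, is fine.) Once you replace these attempts with the two-sided per-slot drift bound, your proof coincides with the paper's.
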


\begin{proof}
See Appendix B.
\end{proof}
Hence, Lemma 2 provides an upper bound for the drift-plus-penalty function  $\Delta \mathcal{L}_k[n]$, eliminating the requirement of predicting the future value of $Y_k[n]$ within each time frame. This upper bound significantly reduces the complexity of the problem and makes it well-suited for the two-timescale model.

Based on the Lyapunov optimization theory, the objective function of the original problem \eqref{P0}, which aims to minimize the long-term average energy consumption while ensuring the DT synchronization reliability constraints, can be reformulated as the minimization of the drift-minus-bounds function, expressed as follows:
\begin{equation}
	\begin{split}
		&\eta\Delta \mathcal{L}[n]+\frac{1}{QKT}  \left( { \sum\limits_{q=0}^{Q-1} \sum\limits_{n\in\mathcal{T}_q} \sum\limits_{k=1}^K \mathbb{E} \{E_k[n]\}}\right)\\
		&\leq\eta\sum\limits_{q=0}^{Q-1} \sum\limits_{n\in\mathcal{T}_q} \sum\limits_{k=1}^K\left\{ B_{k,2}T+\mathbb{E}\left\{Y_k[qT]\sum \limits_{n\in\mathcal{T}_q}\left[X_k[n]-\varepsilon \right]\right\}\right\}\\
		&+\sum\limits_{q=0}^{Q-1} \sum\limits_{n\in\mathcal{T}_q} \sum\limits_{k=1}^K\mathbb{E}\left\{\frac{E_k[n]}{QKT} \right\},
	\end{split}
\end{equation}
where $\eta$ denotes the control factor that governs the tradeoff between energy consumption and DT synchronization reliability. 
By disregarding terms in the objective function that are unrelated to the optimization variables, problem \eqref{P0} can be rewritten as follows:
\begin{subequations}\label{P1}
	\begin{align}
		\mathop {\min } \limits_{{\bm \beta},{\bf p,w,f}} & \lim\limits_{Q\rightarrow\infty} \sum\limits_{q=0}^{Q-1} \sum\limits_{n\in\mathcal{T}_q} \sum\limits_{k=1}^K\mathbb{E}\left\{\Xi_k[n]\right\} \label{P1:ob} \\
		\text{s.t.}
		& ~ \text{(15c) -- (15h)},
	\end{align}
\end{subequations}
where $\Xi_k[n]=\frac{E_k[n]}{QKT}+\eta Y_k[qT][X_k[n]-\varepsilon]$.

\subsection{Modeling of the POMDP}
Traditional optimization approaches typically assume complete information, making them poorly suited for dynamic and uncertain environments.
Consequently, we reformulate the problem \eqref{P1} as a POMDP to enable informed sequential decision-making that can dynamically adapt to the evolving environment and its inherent uncertainties.
By expressing our problem as a POMDP, we can leverage a distributed framework to efficiently and effectively address the curse of dimensionality arising from the involvement of multiple BSs and MUs in the network, while keeping the computational complexity low. Hereinafter, we first introduce the fundamental components of a POMDP, and then we discuss each component in the context of our problem.

A POMDP of an agent set $\mathcal{I}$ can be denoted as $<\mathcal{I},\mathcal{O},\mathcal{S},\mathcal{A},\mathcal{P},\mathcal{R},\pi>$, which is composed of state space $\mathcal{S}$, observation space $\mathcal{O}$, action space $\mathcal{A}$, reward function $\mathcal{R}$, probability of environment transferring $\mathcal{P}$, and policy $\pi$. We consider one time slot as a single time step.  The time slots and time steps are used alternately in the following paper. 
The global state $s[n]\in\mathcal{S}$ is partially observable to agents, especially in privacy-aware MEC networks and distributed structures. Consequently, the agents can only get an observation $o_i[n]\in\mathcal{O}$ from the environment. 
$\pi_i(a_i[n]\vert o_i[n])$ denotes the probability of taking action $a_i[n]$ at state $o_i[n]$,
and then the reward $r[n]$ is given by function $\mathcal{R}$ after the action.
In our model, the agents include each BS, MU, and the control center. On this basis, the total number of agents, denoted by $I$, is given by $I=K+M+1$.
The partial observation of agents $o_i[n]$, where $i\in\mathcal{I}=\{1,2,\ldots, I\}$, at time step $n$ can be merged to construct the global state of environment $s[n]$. It can be observed that the MUs, BSs, and the control center can be classified into three types of agents. We denote the set of MU agents as $I_u=\{1,2,\ldots,K\}$ and the set of BS agents as $I_U=\{K+1,K+2,\ldots,K+M\}$. The POMDP elements are formulated as follows:
\subsubsection{MU agent} Each MU acts as an agent, collecting its own synchronization information and location. Given that MUs often have limited energy resources that are difficult to replenish, it is crucial to manage their power consumption carefully. Based on MU agents observations, the MU agents determine their associated BS, and decide on the transmission power according to their policy. The primary energy consumption of MUs is due to transmission power; thus, optimizing this energy expenditure is essential for maintaining long-term DT synchronization and enhancing overall network efficiency. The observation and action of MU agents are illustrated below.

\textbf{Observation}: MUs can obtain their own DT synchronization information, and the coordinates of themselves and BSs via location service, the observation $o_k[n],  k \in \mathcal{K}$ can be defined by
\begin{equation}
	o_k[n]=\left\{k, {\bf u}_k[n],{\bf q}_m[n], \Omega_k[n],  m \in \mathcal{M}\right\}.
\end{equation}
Note that we add the index of MUs to distinguish their observations.

\textbf{Action}: After receiving the current environmental information, the MU agents need to associate with the BS, and allocate the transmission power accordingly. The action of MU $k$ can be represented by
\begin{equation}
	a_k[n]=\left\{ \alpha_{k,m}[n], p_k[n], m \in \mathcal{M}\right\}.
\end{equation}

\subsubsection{BS agent} We define the set of MUs requesting DT synchronization as $\mathcal{K}^{\star} = \{k \in \mathcal{K} \mid a_k[n]=1\}$.
The BS can obtain the current location information of the MUs using positioning services.
After the BSs receive the DT synchronization requests from the MUs, the BSs initiate the synchronization process.
Note that, the synchronization request $\Omega_k[n]$ of the MU only consists of a triplet, and its size is very small. 
Therefore, the data exchange of the MU's DT synchronization requests between BSs can be considered negligible.
Subsequently, they allocate both communication and computation resources for MUs.  Thus, the BSs can be treated as agents, whose MDP elements are defined as follows:

\textbf{Observation}: BS $m$ can observe the DT synchronization information and the location of MUs by exchanging the information of its served MUs with other BSs. As a result, the observation of BS $m$ is given by
\begin{equation}
	\begin{split}
		o_{K+m}[n]=&\left\{ m, \alpha_{k,m}[n], {\bf u}_k[n], {\bf q}_{m}[n],{\bf q}_{-m}[n], \Omega_k[n],  \right.\\
		& \left. \zeta_{k,m,m^{\prime}}[n], k \in\mathcal{K}^{\star},m^{\prime} \in \mathcal{M} \right\},
	\end{split}
\end{equation}
where $-m$ denotes the set in $\mathcal{M}$ excluding $m$.

\textbf{Action}: After receiving the current environment information, the BS agents need to allocate the wired transmission rate and computation resource allocation according to the observations. The actions of BS $m$ can be denoted by
\begin{equation}
	a_{K+m}[n]=\left\{ w_{m,j,k}[n], f_{m,k}[n], k \in \mathcal{K}^{\star}, j \in \mathcal{M} \right\}.
\end{equation}

\subsubsection{Control center agent} At the end of each time frame, the control center needs to observe the locations of BSs and MUs, and the association status. Then, it decides the DT deployment variable ${\bm \beta}$ for DT migration. Typically, the central controller is equipped with powerful cloud servers. In addition, transmission between the BSs and the central controller is conducted via wired communication. Therefore, the processing delay and communication delay of the central controller can be disregarded.

\textbf{Observation}: The observation of the control center agent is defined as follows:
\begin{equation}
	o_{K+M+1}[n]=\left\{ \alpha_{k,m}[n], {\bf q}_m[n], {\bf u}_k[n], m \in \mathcal{M}, k \in \mathcal{K} \right\}.
\end{equation}

\textbf{Action}: The control center agent manages the deployment of the DT, and the corresponding actions are defined accordingly as:
\begin{equation}
	a_{K+M+1}[n]=\left\{ \beta_k(q), k \in \mathcal{K} \right\}.
\end{equation}

\textbf{Reward:} The sampling efficiency of the control center agent will be much lower than other agents if it only makes one step during each time frame. This creates inconsistency between the learning speed of agents, and thus sharply increases the time consumption and complexity of training. To relieve the problem, we allow the control center agent to make actions not only in time slot $n=qT$, but also in other time slots during time frame $q$. 
We define an independent action $\tilde{a}_{K+M+1}[n]=\{ \tilde{\beta}_k[n],  k \in \mathcal{K}, n \in \mathcal{N}\}$, which do not affect the real DT deployment $\beta_k(q)$ from $qT+1$-th to $(q+1)T$-th time slot.
Hence, the independent actions are evaluated by a reward that estimates the total time consumption of the MUs imposed by transferring and computing by the BSs. The reward function takes the following form:
\begin{equation}\label{eq:cenre}
	r_c[n]=-\frac{1}{K}\sum\limits_{k=1}^K r_{c,k}[n],
\end{equation}
where
\begin{equation}\label{eq:cenrek}
	r_{c,k}[n]= 	\begin{cases}
		\frac{D_k[n]C_k[n]}{f_{k,\tilde{\beta}_k[n]}[n]}, & \text{$\alpha_{k,m}[n]=\alpha_{k,\tilde{\beta}_k[n]}[n]$},    \\
		\frac{D_k[n]}{w_k[n]}+\frac{D_k[n]C_k[n]}{f_{k,\tilde{\beta}_k[n]}[n]}, & \text{$\alpha_{k,m}[n]\neq\alpha_{k,\tilde{\beta}_k[n]}[n]$}.
	\end{cases}
\end{equation}
In other words, we only set $\beta_k(q)=\tilde{\beta}_k[q\cdot T]$ in time slot $n=qT$, and collect rewards $r_c[n]$ during each time slot $n\in\mathcal{T}_q$.

\subsubsection{Global reward} According to the objective function formulated by problem \eqref{P1}, the global reward function of the network is defined as follows for all MU agents and BS agents:
\begin{align}\label{eq:reward}
	r_g[n]=-\sum\limits_{k=1}^K\Xi_k[n].
\end{align}
The global reward can be considered as the reward of MUs and BSs.
For notational simplicity, we represent the reward for any one of the agents, including $r_g[n]$ and $r_c[n]$, as $r[n]$.

Due to the dynamic network environments and the difficulty in precisely obtaining the probability of environment transferring $\mathcal{P}$ for POMDP, obtaining an exact POMDP model becomes challenging. Consequently, traditional solution methods such as dynamic programming become impractical. As a result, MADRL, which is capable of operating without a known model and learning from environmental interactions, emerges as the preferred approach for solving the POMDP problem.

\section{DT Synchronization BETA-HAPPO Algorithm} \label{s:proposed}
After modeling the proposed system as a POMDP, we tackle the problem using MADRL methods that orchestrate multiple agents to cooperatively learn the optimal policy through environmental interactions. Compared to single-agent DRL, MADRL offers greater extensibility for high-dimensional action spaces and better adaptation to distributed systems. The centralized training and decentralized execution (CTDE) framework, a notable paradigm in MADRL, enables agents to access the global state during training and execute actions independently \cite{lowe2017multi}, thus mitigating the negative impacts of environmental variability. Within this framework, the HAPPO algorithm represents a pivotal advancement \cite{Jakub2022ICLR_Trust}. The HAPPO algorithm is a policy-based trust region learning algorithm that employs the widely used CTDE mechanism,  for more effective collaborative learning. The HAPPO algorithm does not require agents of the same type to share identical neural network parameters for their distributed actor and critic networks. This approach prevents the exponential degradation of the reward function that usually happens as the number of agents increases.

\begin{figure*}[t]
	\centering
	\includegraphics[width=\textwidth]{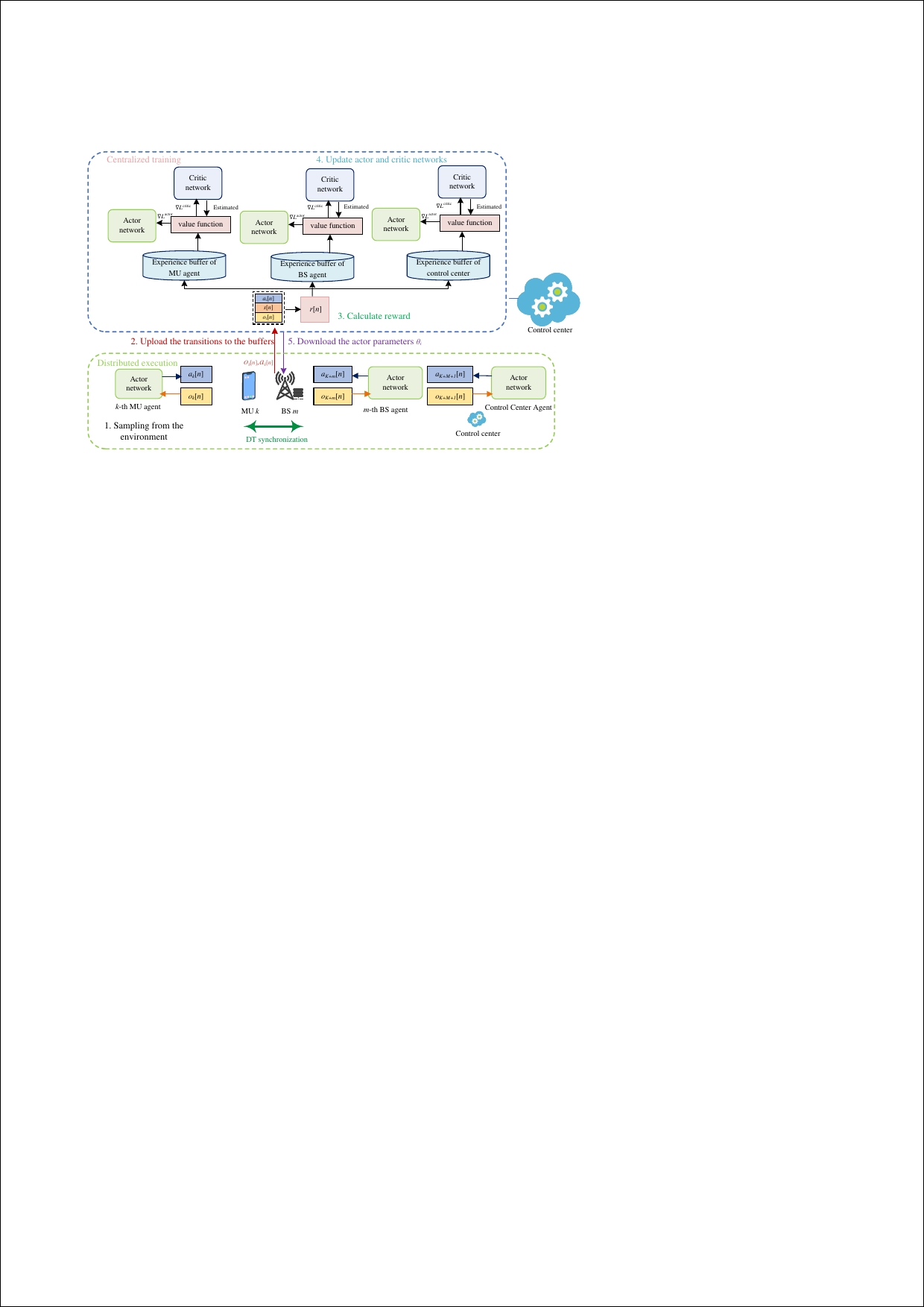}
 \vspace{-2em}
	\caption{An illustration of the Beta-HAPPO-based training framework.} \label{fig:framework}
	
     \vspace{-1.5em}
\end{figure*}

We propose a training framework based on HAPPO, where each MU, BS, and the control center are represented as individual agents. Each agent possesses an actor network and a critic network, as illustrated in Fig. \ref{fig:framework}.    
${\bm\theta}_i$ and ${\bm\xi}_i$ denote the actor network parameters and the critic network parameters for each agent $i$, respectively.
Their joint policy is represented as ${\bm \pi}$.
We regard a time slot as a decision time step in the POMDP.
Our proposed training architecture is divided into two key phases: environmental interaction and policy updating. 

During the environmental interaction phase, each agent observes the partial environment $o_i$, feeding this data into both the actor and critic networks.  
Following this, the MUs, BSs, and the control center utilize the actor network to generate actions $a_i$ based on partial observations $o_i$ according to the policy function $\pi_i(a_i\vert o_i)$. Subsequently, the control center merges these partial observations $o_i$ into the global state $s$. Moreover, the control center combines the global state $s$, the actions $a_i$, and other environmental details to compute the reward $r$, which is then stored in the control center's experience buffer.
During the policy update phase, a selection of samples is drawn from the experience buffer to train the actor and critic networks. 
Meanwhile, the critic networks, informed by a global environmental state $s$, evaluate the joint policy $\pi$ and assist the actor network in refining its policy.
We adopt a gradient-based approach to update the actor and critic network parameters. Let $L_{{\rm actor}}({\bm\theta}_i)$ and $L_{{\rm critic}}({\bm\xi}_i)$ represent the loss function of the actor and critic network of agent $i$, respectively. The derivation of the loss functions will be presented in the subsequent part.

Before defining the loss function of the actor network, it is crucial to introduce the concept of the multi-agent advantage function, as it serves as a key component in optimizing the actor network's loss function.
We first define two disjoint and ordered subsets of agents, denoted as $i_{1:u} = \{i_1, i_2, \ldots, i_u\}$ and $j_{1:v} = \{j_1, \ldots, j_v\}$. Subsequently, given the actions $\mathbf{a}_{j_{1:v}}$ taken by agents $j_{1:v}$, the multi-agent advantage function $A_{i_{1:u}}(s,{\bf a}_{j_{1:v}},{\bf a}_{i_{1:u}})$ quantifies the expected performance gain of agents $i_{1:u}$ when they take actions $\mathbf{a}_{i_{1:u}}$.
Specifically, if $A_{i_{1:u}}(s, \mathbf{a}_{j_{1:v}}, \mathbf{a}_{i_{1:u}}) > 0$, then the actions $\mathbf{a}_{i_{1:u}}$ perform better than what the average or baseline policy would predict. In other words, the execution of this action results in a reward that is higher than the average expectation, thus incentivizing agents \( i_{1:u} \) to take this action.
Otherwise, the action taken by the agents leads to a reward that is below the average level, suggesting that the agents should typically avoid this action. 
Furthermore, we introduce the multi-agent advantage function decomposition lemma in \cite{Jakub2022ICLR_Trust}, which is to quantify and analyze the impact of the actions taken by a subset of agents on the overall performance.
To simplify the notation, we omit subscript $i$ and elevate the corresponding subscripts to replace the original position of $i$.
\begin{lemma}
	(Multi-agent advantage function decomposition lemma\cite{Jakub2022ICLR_Trust}) For a given joint policy $\bm{\pi}$, any state $s$, any joint action ${\bf a}_{{1:u}}$ for agent subset $i_{1:u}$, the advantage function $A_{{1:u}}(s,{\bf a}_{{1:u-1}})$ can be decomposed as follows:
	\begin{equation}
		A_{{1:u}}(s,{\bf a}_{{1:u}})=\sum_{j=1}^u A_{j}(s,{\bf a}_{{1:j-1}},a_{j}),
	\end{equation}
where $A_{j}(s,{\bf a}_{{1:j-1}},a_{j})$ quantifies the expected performance gain of agent $i_{j}$ when it takes actions $a_{{j}}$, given the actions $\mathbf{a}_{{1:j-1}}$ taken by agents $i_{1:j-1}$.
\end{lemma}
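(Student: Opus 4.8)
The plan is to prove the identity by recognizing the right-hand side as a telescoping sum of multi-agent state-action value functions, so that the whole statement reduces to unpacking the definitions inherited from \cite{Jakub2022ICLR_Trust}. First I would recall those definitions explicitly. For an ordered subset $i_{1:m}$, the multi-agent $Q$-function $Q_{1:m}(s,{\bf a}_{1:m})$ is obtained from the joint value function by fixing the actions of agents $i_1,\ldots,i_m$ and taking the expectation over the actions of all remaining agents under the joint policy ${\bm\pi}$, with the two boundary conventions $Q_{1:0}(s)=V^{\bm\pi}(s)$ (no action fixed) and, when $\{i_1,\ldots,i_u\}$ exhausts the agent set, $Q_{1:u}(s,{\bf a}_{1:u})=Q^{\bm\pi}(s,{\bf a})$. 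By the same definitions, the single-agent increment appearing on the right-hand side is
\[
A_{j}(s,{\bf a}_{1:j-1},a_j)=Q_{1:j}(s,{\bf a}_{1:j})-Q_{1:j-1}(s,{\bf a}_{1:j-1}),
\]
i.e. the marginal change in expected return caused by additionally revealing agent $i_j$'s action $a_j$ on top of the already-fixed actions ${\bf a}_{1:j-1}$, and $A_{1:u}(s,{\bf a}_{1:u})=Q_{1:u}(s,{\bf a}_{1:u})-V^{\bm\pi}(s)$.

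The second step is then pure bookkeeping: substitute the difference form of each summand and telescope,
\[
\sum_{j=1}^u A_{j}(s,{\bf a}_{1:j-1},a_j)=\sum_{j=1}^u\bigl[Q_{1:j}(s,{\bf a}_{1:j})-Q_{1:j-1}(s,{\bf a}_{1:j-1})\bigr]=Q_{1:u}(s,{\bf a}_{1:u})-Q_{1:0}(s).
\]
Applying the boundary conventions $Q_{1:0}(s)=V^{\bm\pi}(s)$ and $A_{1:u}(s,{\bf a}_{1:u})=Q_{1:u}(s,{\bf a}_{1:u})-V^{\bm\pi}(s)$, the right-hand side collapses to $A_{1:u}(s,{\bf a}_{1:u})$, which is exactly the claimed decomposition.

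The one point that deserves care, and which I expect to be the only real obstacle, is justifying that the intermediate objects $Q_{1:j}$ are consistently defined so that consecutive terms of the telescoping sum share the same $Q_{1:j}$. Concretely, one must verify that the object produced by marginalizing over the complement of $i_{1:j}$ (as it appears inside $A_j$) coincides with the object obtained by conditioning $Q_{1:j-1}$ on the extra action $a_j$ (as it appears inside $A_{j+1}$). This follows from the fact that the joint policy factorizes across agents given the state, so the nested conditioning on ${\bf a}_{1:j-1}$ and then on ${\bf a}_{1:j}$ is compatible with the nested marginalization; I would spell this out as an application of the law of total expectation over the agent ordering. No contraction or fixed-point reasoning is needed: once the definitions and this consistency are in place, the result is a purely algebraic identity among expectations and the telescoping is immediate.
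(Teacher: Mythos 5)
The paper does not prove this lemma at all: it is quoted verbatim from the HAPPO paper \cite{Jakub2022ICLR_Trust}, so there is no in-paper argument to compare against. Your proposal is nevertheless correct, and it is essentially the proof given in that cited source: define the partial state-action value $Q_{1:j}$ by fixing the actions of $i_{1:j}$ and averaging over the remaining agents under $\bm{\pi}$, note $A_{j}(s,{\bf a}_{1:j-1},a_j)=Q_{1:j}(s,{\bf a}_{1:j})-Q_{1:j-1}(s,{\bf a}_{1:j-1})$ and $A_{1:u}(s,{\bf a}_{1:u})=Q_{1:u}(s,{\bf a}_{1:u})-V^{\bm\pi}(s)$ with $Q_{1:0}=V^{\bm\pi}$, and telescope. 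The consistency point you flag is not a genuine obstacle: since the joint policy factorizes as a product of per-agent policies conditioned on $s$, the expectation over the complement of $i_{1:j}$ does not depend on which actions have been fixed, so $Q_{1:j}$ is a single well-defined object appearing in both adjacent summands and the telescoping is immediate.
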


Correspondingly, Lemma 3 facilitates the use of a sequential policy update scheme. By decomposing the global advantage function ${A_{{1:u}}(s,{\bf a}_{{1:u}})}$ into local advantage functions ${A_{j}(s,{\bf a}_{{1:j-1}},a_{j})}$ for each agent and utilizing a sequential policy update scheme, this approach allows agents to monotonically improve their policies based on their advantage functions. 
To ensure this improvement, it is imperative to satisfy the condition ${A_{j}(s,{\bf a}_{{1:j-1}},a_{j})\geq 0}$. By continuously enhancing its policy in a monotonic manner, the agent can meet this condition and thereby achieve a continuous improvement in the overall system performance within a multi-agent setting.
Building on this, a sequential update scheme is adopted, which randomly draws a permutation $u=\{u_1,\ldots,u_I\}$ and updates the policy of agents $i_{1:I}$ by the order $\{i_{u_1},\ldots,i_{u_I}\}$.
For any given joint policy ${\bm \pi}$, action ${{\bf a}_{u}}$ of agent ${i_u}$ and joint action ${\bf a}$, the agents update their actor network according to the surrogate objective function defined as follows\cite{Jakub2022ICLR_Trust}:
\begin{align}\label{eq:L_actor}
	L_{{\rm actor}} & (\bm\theta_{u})=\mathbb{E}\Bigg\{ \min\Bigg[\frac{\pi_{u}(a_{u}[n]\vert o_{u}[n];\bm\theta_{u})}{\pi_{u}(a_{u}[n]\vert o_{u}[n];\tilde{\bm\theta}_{u})} M_{{1:u}}(s,{\bf a})[n], \notag\\
	&{\rm clip}\left( \frac{\pi_{u}(a_{u}[n]\vert o_{u}[n];\bm\theta_{u})}{\pi_{u}(a_u[n]\vert o_{u}[n];\tilde{\bm\theta}_{u})},1\pm\epsilon \right) M_{{1:u}}(s,{\bf a})[n]  \Bigg]  \notag\\
	&+\psi S(o_{u}[n]) \Bigg\},
\end{align}
where {$\tilde{\bm \theta}_{u}$} represents the actor network parameters of agent $i_u$ following the most recent update. $\psi$ is a hyperparameter used to adjust the intensity of exploration.
The first term in \eqref{eq:L_actor} is the clipped surrogate objective function, which alleviates the policy from changing too much as well as performance fluctuation, and the second term $S(o_{u}[n])$ is the entropy regularization term to enhance the ability for exploration\cite{Cai2022TNSE}.
{$M_{{1:u}}(s,{\bf a})[n]=\frac{{\bm \pi}_{{1:u-1}}({\bf a}_{{1:u-1}}[n]\vert o_{1:u-1}[n];\bm\theta_{u})}{{\bm \pi}_{{1:u-1}}({\bf a}_{{1:u-1}}[n]\vert o_{1:u-1}[n];\tilde{\bm\theta}_{u})}\hat{A}_{u}(s[n])$} is calculated to estimate the mathematical expectation of $A_{u}(s,{\bf a}_{{1:u-1}},a_{u})$,
where {${\bm \pi}_{{1:u-1}}(\cdot)$} is the joint policy of agent subset $i_{1:u-1}$, and {$\hat{A}_{u}(s[n])$} is the agent $i_u$ estimated advantage function based on generalized advantage estimation (GAE)\cite{Schulman2015arxiV}. 
{For any given global state $s[n]$ and reward $r[n]$ at time slot $n$, the formulation of the GAE function is as follows:
\begin{align} \label{eq:GAE}
	\hat{A}_{u}(s[n])=  &\sum\limits_{l=0}^\infty(\gamma \varrho)^l\Big(r[n+l]+\gamma V_{\bm\xi_{u}}\big(s[n+1+l]\big) \notag 
	               \\ &-V_{\bm\xi_{u}}(s[n+l])\Big),
\end{align}
where $ \varrho$ reveals the tradeoff status between variance and bias in training.}
$\gamma$ serves as the discount factor, signifying the weight assigned to forthcoming rewards by the agents.
$V_{\bm\xi_{u}}(s[n])$ indicates the value-function approximated with the critic network. Here, the critic network parameter $\bm\xi_{u}$ of agent $i_u$  is updated through following loss function minimization\cite{Grondman2012TSMCC}:
\begin{equation}\label{eq:L_critic}
	L_{{\rm critic}}(\xi_{u})=\mathbb{E}\left\{ \frac{1}{2}\left(r[n]+\gamma V_{\xi_{u}}\big(s[n+1])  
	-V_{\xi_{u}}(s[n])\right)^2 \right\}.
\end{equation}
Thus, the actor networks are improved through gradient ascent, while the critic networks are optimized using gradient descent \cite{Grondman2012TSMCC}. The formulas for these optimizations are expressed as follows:
\begin{equation}
	\bm\theta_{u}\leftarrow\bm\theta_{u}+l_a\nabla L_{\rm actor}(\bm\theta_{u}), \label{eq:updateactor}
\end{equation}
\begin{equation}
	\bm \xi_{u}\leftarrow\bm \xi_{u}-l_c\nabla L_{\rm critic}(\bm \xi_{u}), \label{eq:updatecritic}
\end{equation}
where $l_a$ and $l_c$ are learning rate.

The aforementioned actions are typically subject to continuous and bounded constraints, such as \eqref{P0:p} and \eqref{P0:fkt}. Nevertheless, the conventional approach of sampling actions from a Gaussian distribution in policy networks can lead to a skewed estimation of the policy gradient, due to the distortion caused by clipping the values for actions that fall outside the permissible range. To address this issue, we propose to use Beta distribution instead of Gaussian distribution to model the output of policy networks. {The Beta distribution is characterized by the following expression:
\begin{equation}
	f(x,\vartheta,\chi)=\frac{\Gamma(\vartheta+\chi)}{\Gamma(\vartheta)\Gamma(\chi)}x^{\vartheta-1}\Big(1-x\Big)^{\chi-1}, \label{eq:Beta}
\end{equation}
where the shape of the distribution is governed by the parameters $\vartheta$ and $\chi$. }
As the domain of the Beta distribution, denoted by \eqref{eq:Beta}, is restricted, it is well-suited for the sampling of actions that are constrained within certain limits.
Furthermore, Beta distribution can be initialized to be more uniform than Gaussian distribution, enabling agents to explore the action space more extensively during the initial stage of training.

Based on the discussions above, we introduce the Beta-HAPPO training framework, summarized in Algorithm \ref{alg:happo-framework}. The framework comprises a main loop, which unfolds in two primary stages. 
Lines 2-16 introduce the environmental interaction phase, which encompasses the agent's interaction with the environment using the current policy. The outcomes of these interactions are then stored in an experience buffer to be utilized for future network parameter updates.
Subsequently, GAE is calculated in Line 17. 
Lines 18-27 refer to the policy updating phase, which involves updating the network parameters using samples drawn from these environment interactions.
It updates the actor and critic networks by sampling mini-batches from the experience buffer and then clears the buffer for the next iteration. 
With numerous iterations, the policy is refined and redeployed for environmental interaction, thus gathering new samples.

{ Algorithm 1 can be computed in parallel, leading to increased efficiency compared to the serial way.
We utilize a multi-layer perceptron (MLP) in our neural network architecture. In the MLP, the computational complexity of the $j$-th layer is given by $\mathcal{O}(U_{j-1}U_{j}+U_{j}U_{j+1})$, where $U_j$ represents the number of neurons in the $j$-th layer. Consequently, the overall computational complexity of a $J$-layer MLP can be calculated as $\mathcal{O}(\sum_{j=2}^{J-1} U_{j-1}U_{j}+U_{j}U_{j+1})$.
Since the actor and critic networks are based on MLP, the overall computational complexity for ${\rm Me}$ episodes, which have the length of $Q\cdot T$ steps and update times of each training step ${\rm Rp}$, is calculated by $\mathcal{O}\left({\rm Me}(({\rm Rp}+Q\cdot T)\sum_{j=2}^{J-1} U_{j-1}U_{j}+U_{j}U_{j+1}) \right)$. Note that in the one-step decision, the same type of agents execute actions simultaneously. The computational complexity is with respect to one forward propagation by actor networks, which is given by $\mathcal{O}\left(\sum_{j=2}^{J-1} U_{j-1}U_{j}+U_{j}U_{j+1} \right)$.}

\begin{algorithm}[t]
	\caption{Beta-HAPPO-Based MADRL Algorithm}
	\label{alg:happo-framework}
	\begin{algorithmic}[1]
		\STATE{Initialize update times of each training step $\rm{Rp}$, and maximum episodes ${\rm Me}$.}
		\FOR {Episode = 1,$\dots$, ${\rm Me}$}
		\FOR {$q$ = 0,$\dots$, ${\rm Q}-1$}
		\FOR {$n$ = $qT$,$\dots$, $(q+1)T$}
		\STATE {MU agents obtain observation $o_k[n], k \in \mathcal{K}$ from the environment;}
		\STATE {MU agents sample actions $a_k[n]$, $ k \in \mathcal{K}$ according to the actor networks;}
		\STATE {BS agents obtain observation $o_{K+m}[n], m \in \mathcal{M}$ from the environment;}
		\STATE {BS agents sample actions $a_{K+m}[n], m \in \mathcal{M}$ according to the actor networks;}
  	\STATE {Control center agent obtain global state $s[n]$ from the environment;}
        \STATE {Control center samples actions $\tilde{a}_{K+m+1}[n]$ according to the actor networks;}
		\STATE {Upload the transitions to the experience buffers in the control center;}
		\STATE {Control center evaluates the reward according to \eqref{eq:cenre} and \eqref{eq:reward};}
        \STATE {Update the policy entropy and log-probability in experience buffer;}
		\ENDFOR
        \STATE {Control center decides the DT deployment ${\bm \beta}$;}
		\ENDFOR
		\STATE {Calculate the advantage function \eqref{eq:GAE} using GAE;}
		\FOR {epoch = $1,\dots {\rm Rp}$}
		\STATE Randomly generate a permutation $i_{1:I}$ of agents;
		\FOR {agents $u\in\mathcal{I}$}
  	\STATE {Update critic network parameters $\bm\xi_{u}$ according to 
        \eqref{eq:updatecritic};}
		\STATE {Compute $M_{{1:u+1}}(s,{\bf a})=\frac{{\bm \pi}_{u}({\bf a}_{u}\vert o_{u};\bm \theta_{u})}{{\bm \pi}_{u}({\bf a}_{u}\vert o_{u};\tilde{\bm \theta}_{u})}M_{{1:u}}(s,{\bf a})$;}
		\STATE {Update actor network parameters $\bm \theta_{u}$ according to \eqref{eq:updateactor};}
		\ENDFOR
		\ENDFOR
		\STATE {Clear experience buffer;}
		\ENDFOR
		\STATE {Output actor network parameters $\bm \theta_i$ and critic network parameters $\bm \xi_i$.}
	\end{algorithmic}
\end{algorithm}

\section{Numerical Results}\label{s:simulation}
In this section, we evaluate the performance of our proposed Beta-HAPPO scheme by numerical simulations. We adopt the following MADRL benchmark schemes:
\begin{itemize}
	\item \textit{Gaussian-HAPPO}: In this strategy, HAPPO adopts Gaussian distribution in actor networks, i.e., the original version of HAPPO \cite{Jakub2022ICLR_Trust}.
	\item \textit{Beta-MAPPO}: In this scheme, MAPPO adopts Beta distribution in actor networks. Be noted that unlike HAPPO, MAPPO does not adopt a sequential update strategy\cite{Yu2021ar_Surprising}.
	\item \textit{MADDPG}: This strategy is the multi-agent deep deterministic policy gradient (MADDPG) method, which is a policy-based MADRL algorithm. This scheme employs a deterministic policy with exploration noise and does not utilize distribution as the actor output. In contrast to the on-policy PPO-based schemes, MADDPG is an off-policy scheme that relies on a huge replay buffer and the quality of data in it\cite{lowe2017multi}.
\end{itemize}

\subsection{Simulation Settings}
The simulation settings are presented as follows. We consider a 1000 m $\times$ 1000 m square MEC network area, where the MUs are uniformly distributed. 
Unless otherwise specified, we set the number of MUs $K=30$ and the number of BSs $M=5$. 
The average CPU cycles per bit is uniformly generated in $[550,700]$ cycles/bit, where we denote $C_{\max}=700$, and the synchronized data size $D$ is generated with uniform values ranging from 15 to 25 Kb. 
To normalize the reward to [-1,0] for the update, we multiply $r[n]$ by factor $\nu=100$.
Other default settings for environment and algorithm are summarized in Table \ref{tab:env_settings}, 
according to prior works \cite{Liang2022TWC, Waqar2022TITS, Liu2023TWCU}.
{All numerical experiments are conducted using Python 3.7 with PyTorch on an Intel Core i9-13900HX laptop with 32GB of RAM, without GPU involvement.}

\begin{table}
	\caption{Environment Settings and Algorithm Settings}
	\centering
	\renewcommand{\arraystretch}{1.2}
	\begin{tabular}{|p{2.5in}<{\centering}|p{0.6in}<{\centering}|}
		\hline
		\textbf{Parameters} & \textbf{Value}\\
		\hline
		Time frame length $T$ & 100 \\
		\hline
		Frame size $Q$ & 50 \\
		\hline
		Length of time slot $\delta_t$ & 0.05 s \\
		\hline
		Bandwidth $B$ & 10 MHz \\
		\hline
		Total available transmission rate between BSs $w_{i,j}$ & 10 Mb/s \\
		\hline
		Maximum transmit power of MUs $P_k^{\max}$ & 0.5 W \\
		\hline
		Maximum CPU frequency of BSs $f_m^{\max}$ & 10 GHz \\
		\hline
		Probability of request by MUs $\lambda_k$ &0.5 \\
		\hline
		Referenced channel power gain $\rho_0$ &-30 dB\\
		\hline
		Rician factor $\kappa $ & 10\\
		\hline
		Noise power $ \sigma^2$ & -90 dBW\\
		\hline
		Average moving speed of MUs $\bar{s}$ & [2, 10] m/s\\
		\hline
		Width of service area $W$ & 1000 m\\ 
		\hline
		Maximum training episodes $\rm{Me}$ & 150 \\
		\hline
		Discount factor $\gamma$ & 0.9\\
		\hline
		Learning rate $l_a$, $l_c$& 0.0001 \\
		\hline
		Number of mini-batches & 1 \\
		\hline
		Number of each training step $\rm{Rp}$ & 25 \\
		\hline
		Optimizer & Adam \\
		\hline
		Hidden sizes & [128, 64]\\
		\hline
		Variance of exploration noise for MADDPG & 0.5 \\ 
		\hline
		Threshold of DT failure ratio $\varepsilon$ & 0.2 \\
		\hline
		Control factor $\eta$ & 1.0\\
		\hline
	\end{tabular}
	\label{tab:env_settings}
\end{table}

\subsection{Performance Analysis}
We first evaluate the convergence performance of our proposed Beta-HAPPO scheme compared with other benchmarks.  
Figs. \ref{fig:con}, \ref{fig:cenc}, and \ref{fig:obj} show the average reward of MUs and BSs, average reward of control center, and average energy consumption versus time steps, respectively. 
We can observe from Figs. \ref{fig:con} and \ref{fig:cenc} that the rewards of all the schemes gradually increase during training, which reveals the effectiveness of MADRL algorithms. Among the schemes, the proposed Beta-HAPPO scheme has the highest reward and more stable convergence performance. 
Beta-HAPPO is better than Beta-MAPPO, due to the monotone improvement strategy introduced by HAPPO.
Furthermore, the convergence speed of HAPPO-based schemes significantly outperform the MADDPG scheme. It is because the deterministic policy of MADDPG results in weak explorational ability, and the quality of experiences in the buffers is tortuously improved. Moreover, Gaussian-HAPPO reaches a lower reward compared to Beta-HAPPO, validating that the Beta distribution is better than Gaussian distribution in the scenarios with bounded actions.
We further evaluate the average energy consumption of MUs $E_k[n]$ during training in Fig. \ref{fig:obj}. We can observe that the average energy consumption steadily decreases as the training steps increase, indicating the effectiveness of MADRL schemes.
The energy consumption of the proposed Beta-HAPPO is reduced by 20.28\%, 29.52\%, and 41.08\% compared with Beta-MAPPO, Gaussian-HAPPO, and MADDPG, respectively, validating the efficiency of our proposed scheme.
{The time required to complete the training is 87 minutes.}
\begin{figure}[t]
	\centerline{\includegraphics[width=3in]{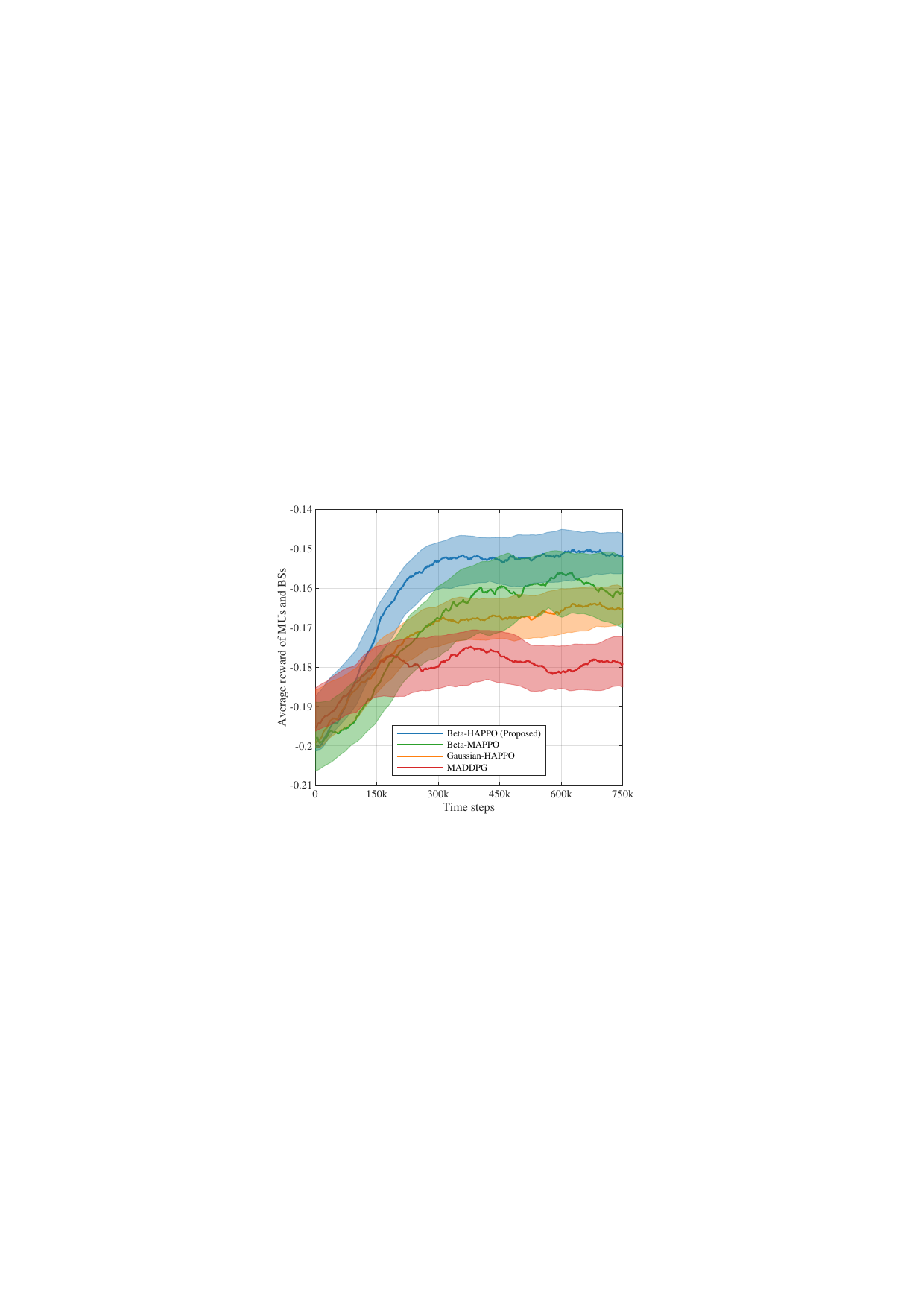}}
 \vspace{-1em}
	\caption{Average reward of MUs and BSs versus time steps.}
	\label{fig:con}
 \vspace{-1em}
\end{figure}

\begin{figure}[t]
	\centerline{\includegraphics[width=3in]{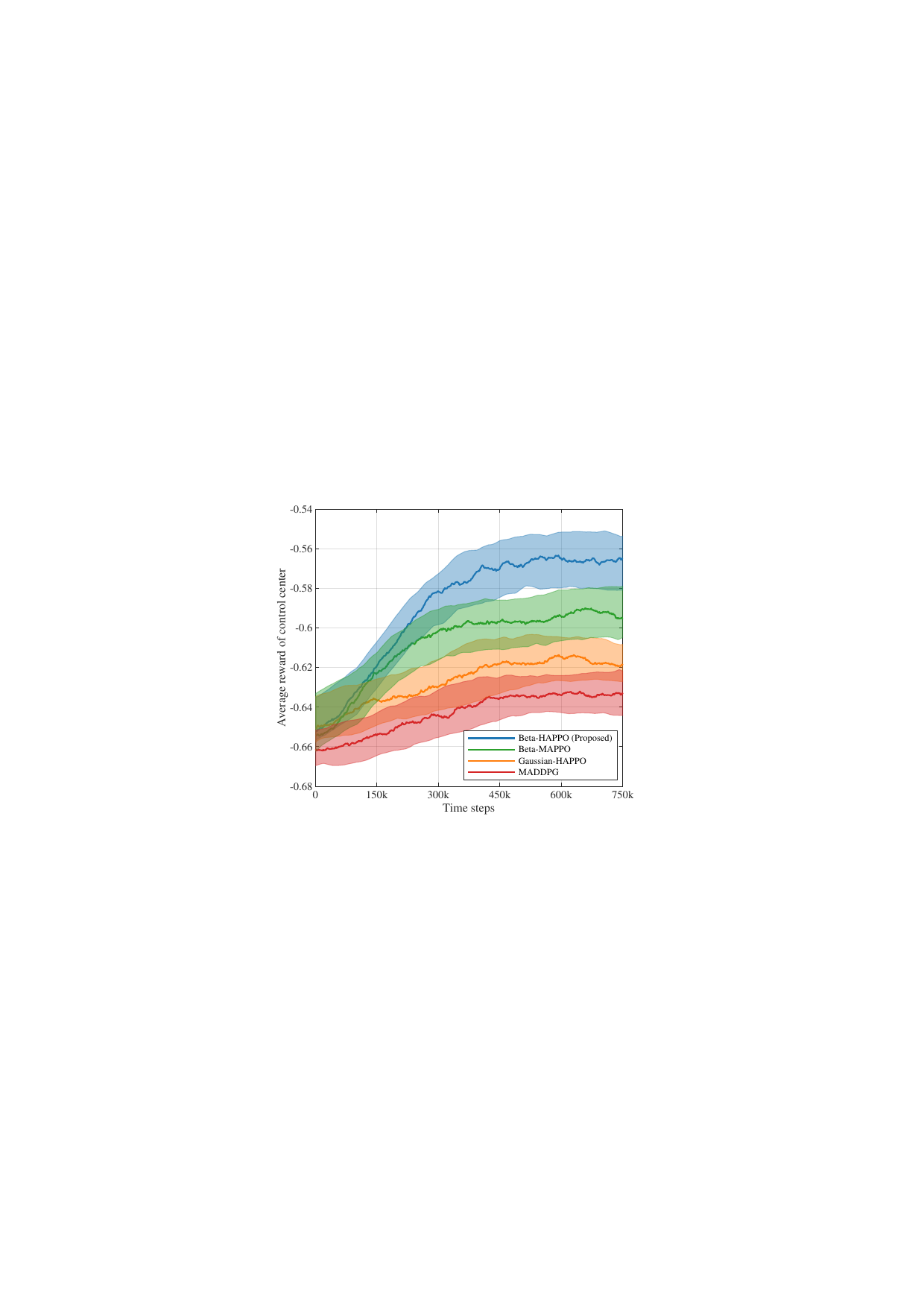}}
 \vspace{-1em}
	\caption{Average reward of control center versus time steps.}
	\label{fig:cenc}
 \vspace{-1em}
\end{figure}

\begin{figure}[t]
	\centerline{\includegraphics[width=3in]{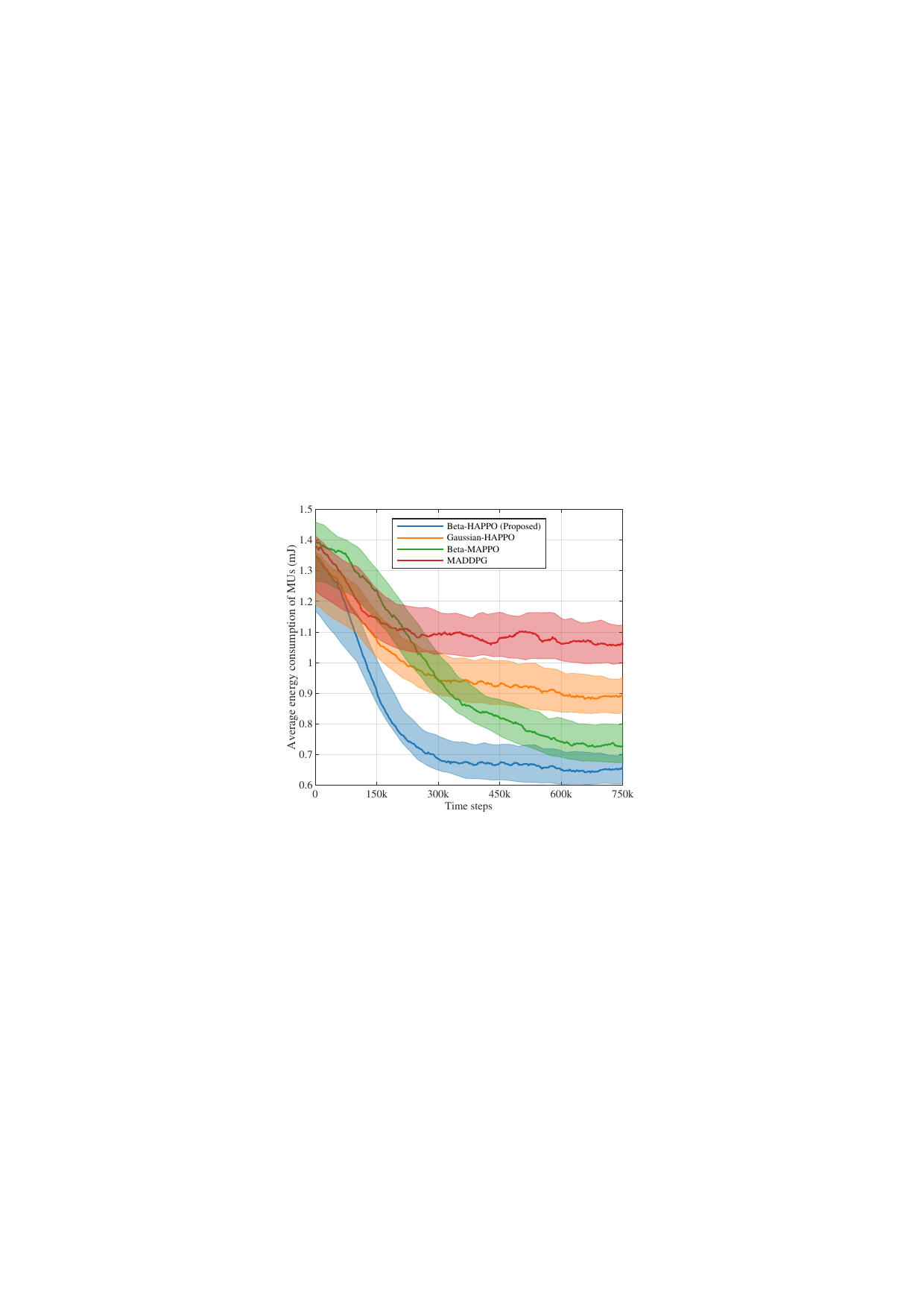}}
\vspace{-1em}
	\caption{Average energy consumption versus time steps.}
 \vspace{-1em}
	\label{fig:obj}
\end{figure}

\begin{figure}[t]
	\centerline{\includegraphics[width=3in]{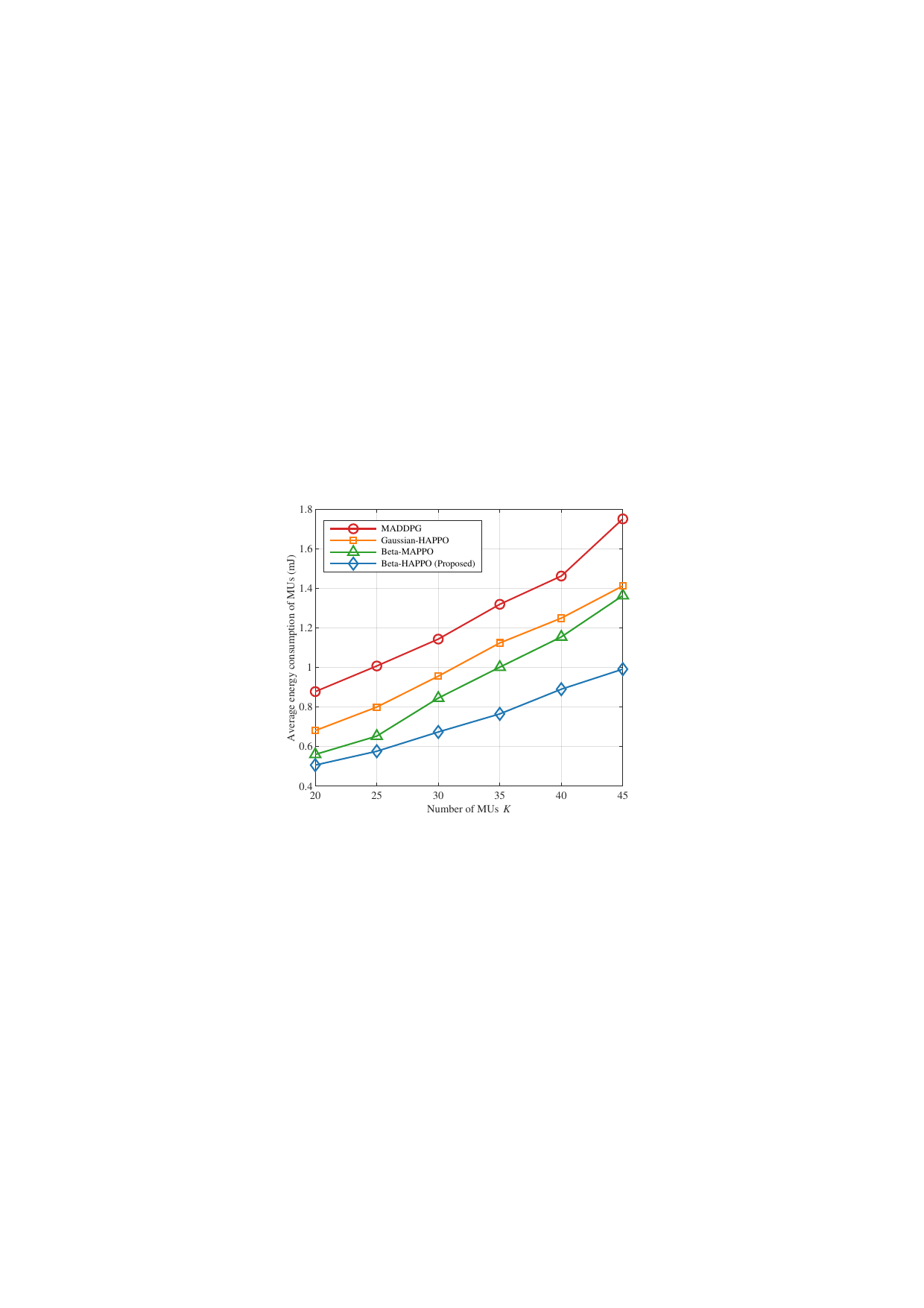}}
 \vspace{-1em}
	\caption{Average energy consumption versus number of MUs.}
	\label{fig:K-alg}
 \vspace{-1em}
\end{figure}

\begin{figure}[t]
	\centerline{\includegraphics[width=3in]{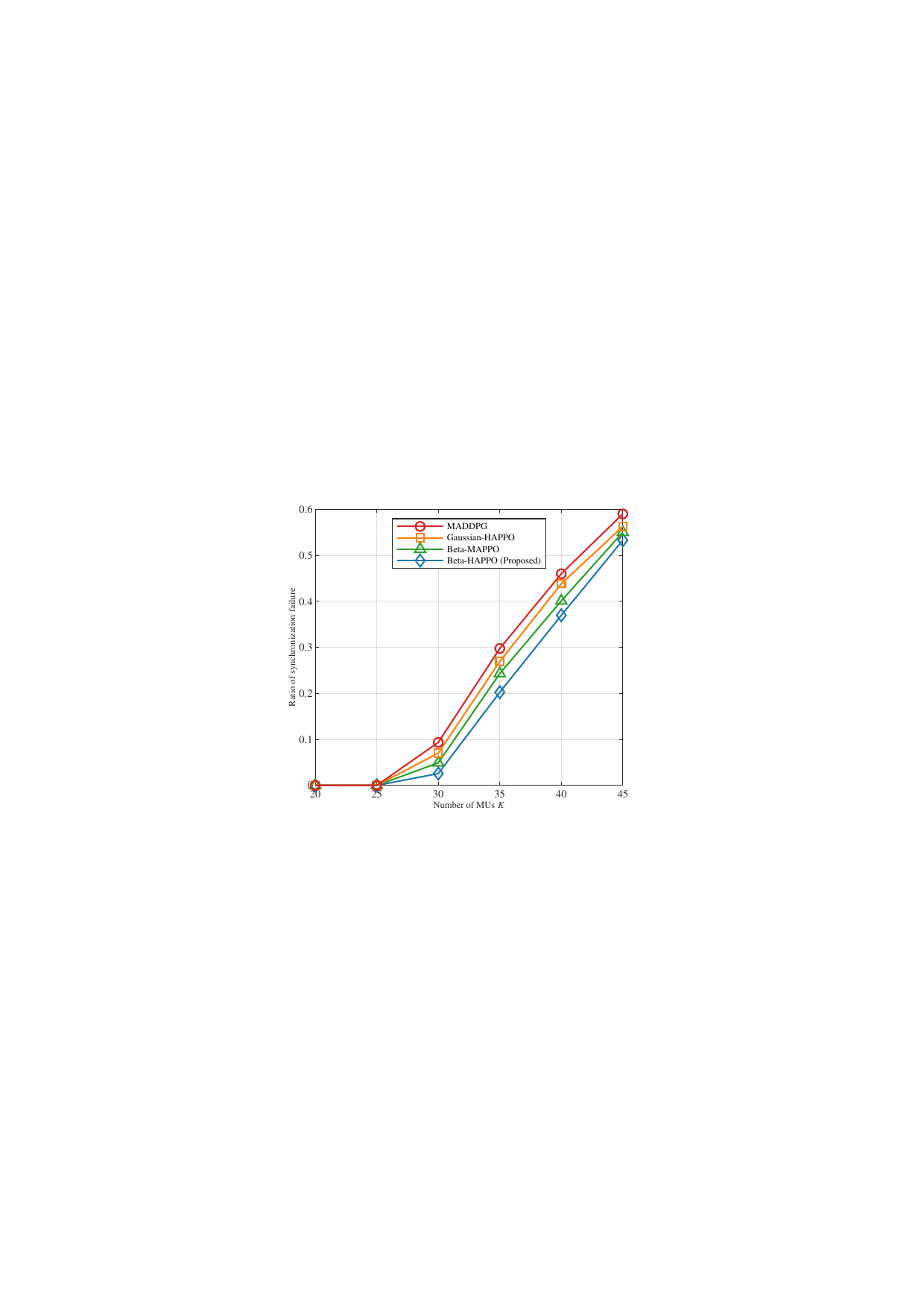}}
 \vspace{-1em}
	{\caption{Synchronization failure ratio versus number of MUs.} \label{fig:K-err}}
        
 \vspace{-1em}
\end{figure}

\begin{figure}[t]
	\centerline{\includegraphics[width=3in]{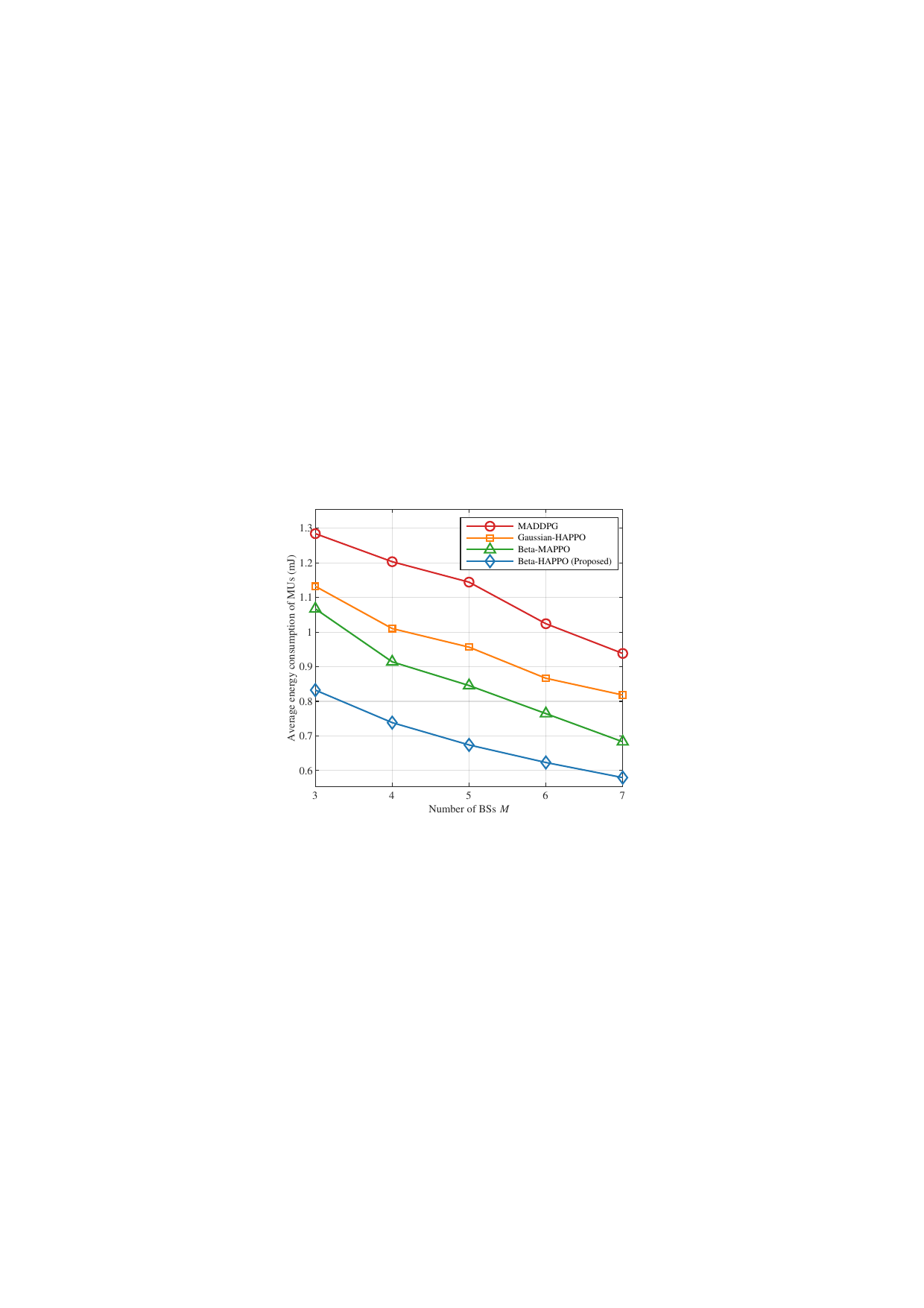}}
 \vspace{-1em}
	\caption{Average energy consumption versus number of BSs.}
	\label{fig:BS-alg}
 \vspace{-1em}
\end{figure}

\begin{figure}[t]
	\centerline{\includegraphics[width=3in]{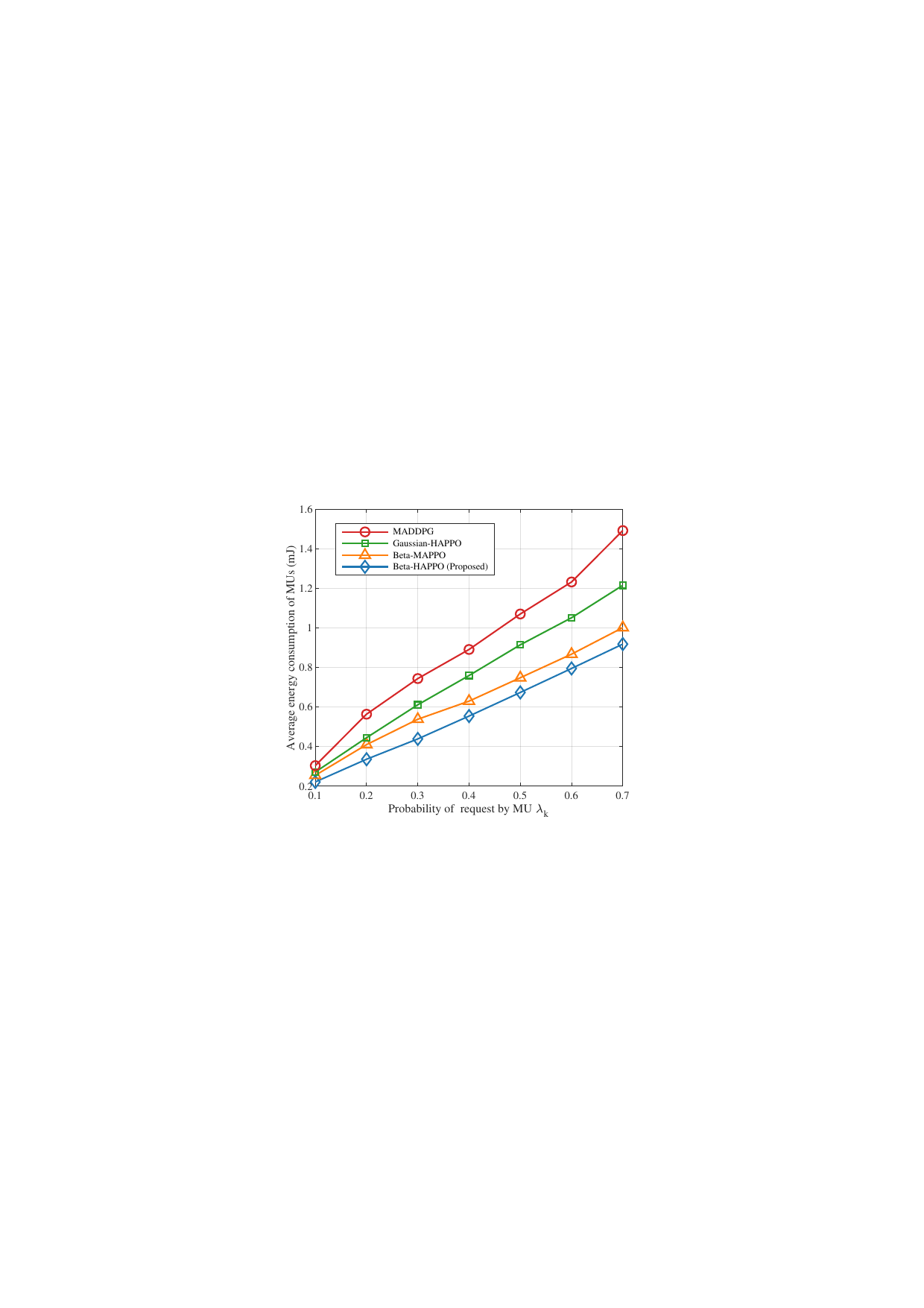}}
 \vspace{-1em}
	\caption{Average energy consumption versus probability of request.}
	\label{fig:prob-alg}
 \vspace{-1em}
\end{figure}

Fig. \ref{fig:K-alg} shows the average energy consumption versus the number of MUs.
It can be observed that the average energy consumption of MUs gradually increases as the number of MUs increases.
This is because the signal interference among MUs increases as their number increases. Consequently, the need for higher transmission power rises, resulting in an overall increase in average energy consumption.
The energy consumption of the proposed Beta-HAPPO scheme is the lowest, which verifies its effectiveness in minimizing the average energy compared to other schemes. 
{Fig. \ref{fig:K-err} shows the synchronization failure ratio versus number of MUs.
It can be observed that when the number of MUs is within 25, the synchronization failure ratio is 0; when the number of MUs exceeds 30, the synchronization failure ratio increases as the number of MUs increases. 
This occurs because when the number of MUs is small, the network resources are sufficient to meet the DT synchronization needs of the MUs. 
However, as the number of MUs increases, the failure ratio of DT synchronization gradually rises. This is because as more MUs join without an increase in the current network resources, resources become exhausted, leading to insufficient allocation for DT synchronization and inability to effectively complete DT synchronization.
The proposed Beta-HAPPO scheme has the lowest synchronization failure ratio, verifying its effectiveness in ensuring synchronization reliability compared to other schemes.}
Fig. \ref{fig:BS-alg} compares the average energy consumption of MUs under different numbers of BSs with $K=25$ MUs. We can see that the average energy consumption of MUs decreases as the number of BSs grows, which indicates that a larger number of BSs can offer a greater diversity gain. 
The proposed Beta-HAPPO retains its superiority on the benchmarks. 
Fig. \ref{fig:prob-alg} shows the average energy consumption versus the synchronization request probability $\lambda_k$. It is apparent that the average transmit energy increases with the request probability.
This is because with a higher synchronization request probability, MUs transmit synchronization data more frequently, resulting in an overall rise in average transmission energy.
Fig. \ref{fig:D-B} evaluates the impact of computational data size and bandwidth on the average energy consumption of MUs. As the maximum data size of synchronization data $D_{\max}$ increases from 0.2 Mb to 0.3 Mb, the average energy consumption of MUs constantly increases. In the meantime, the increase of bandwidth reduces the energy consumption.

Furthermore, we evaluate the impact of control factor $\eta$ on the energy consumption and synchronization failure ratio of MUs in Fig. \ref{fig:weight}. As $\eta$ increases from 0.7 to 1.3, it can be seen that the energy consumption of MUs increases, and the synchronization failure ratio of MUs decreases. 
This demonstrates the feasibility of incorporating the control factor to weigh the energy consumption of the MUs and the synchronization failure ratio.

\begin{figure}[t]
	\centerline{\includegraphics[width=3in]{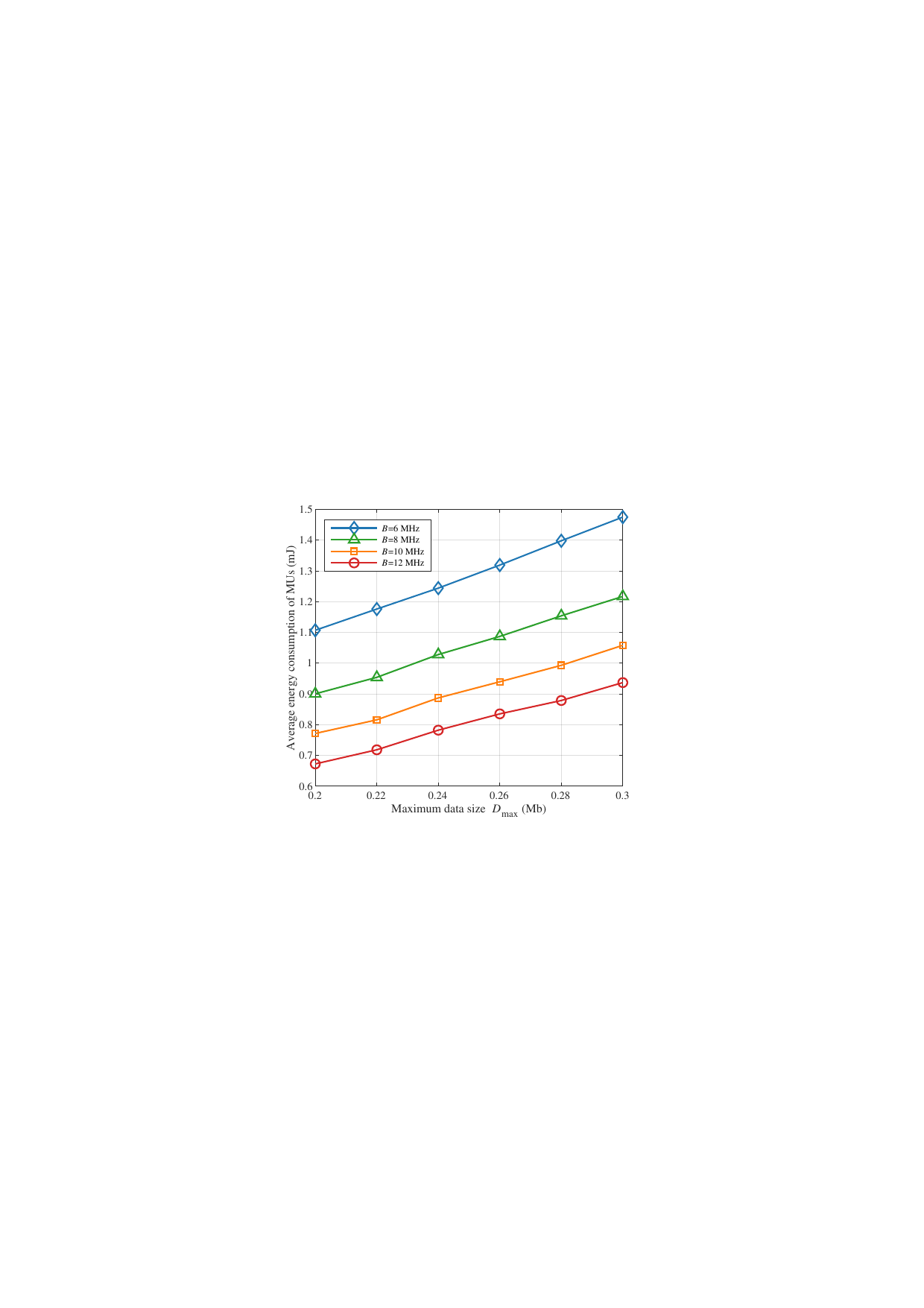}}
 \vspace{-1em}
	\caption{Average energy consumption under different bandwidth and data size.}
	\label{fig:D-B}
 \vspace{-1em}
\end{figure}

\begin{figure}[t]
	\centerline{\includegraphics[width=3in]{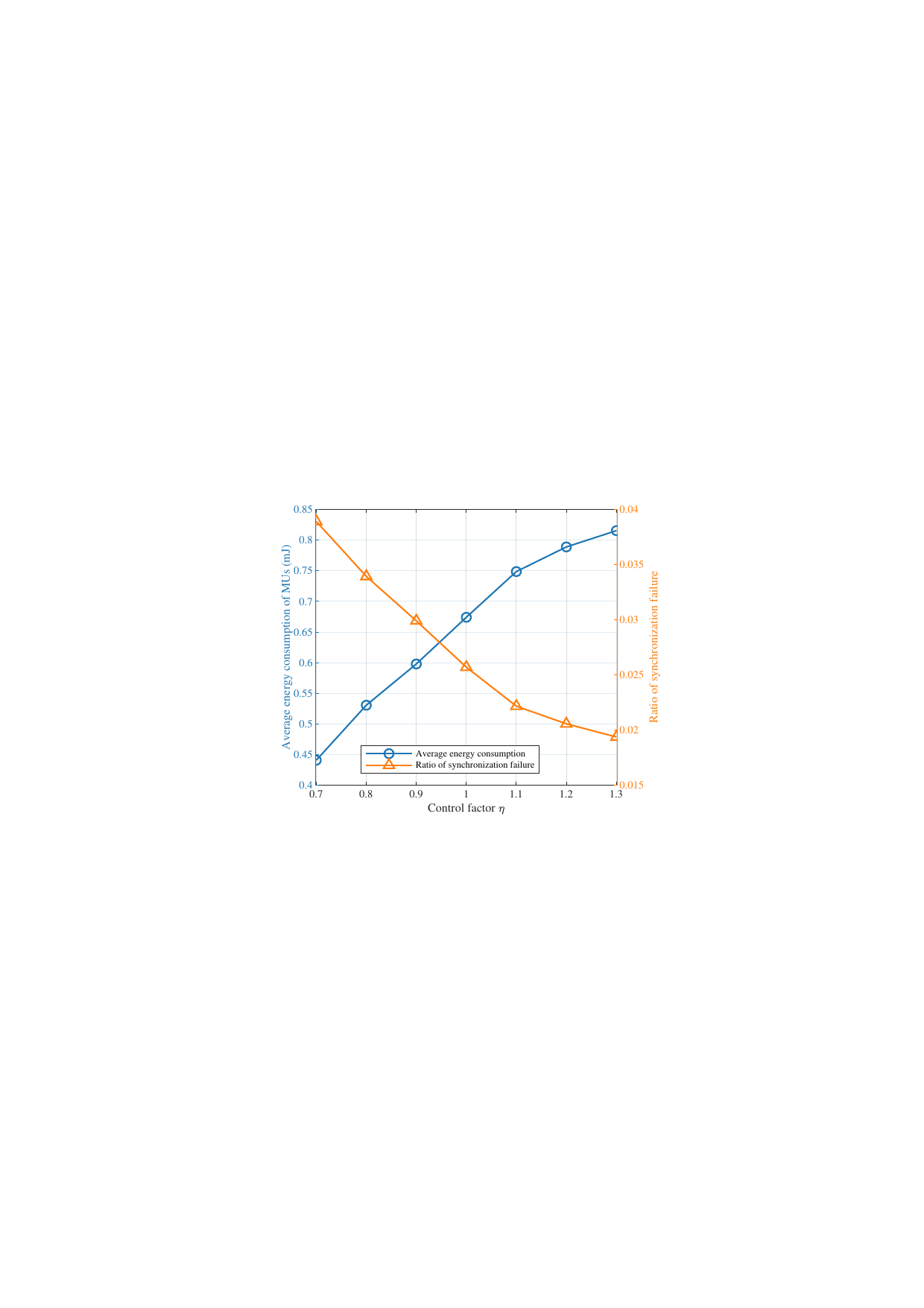}}
 \vspace{-1em}
	\caption{Average energy consumption and synchronization failure ratio of MUs versus control factor.}
	\label{fig:weight}
 \vspace{-1em}
\end{figure}

\begin{figure}[t]
	\centerline{\includegraphics[width=3in]{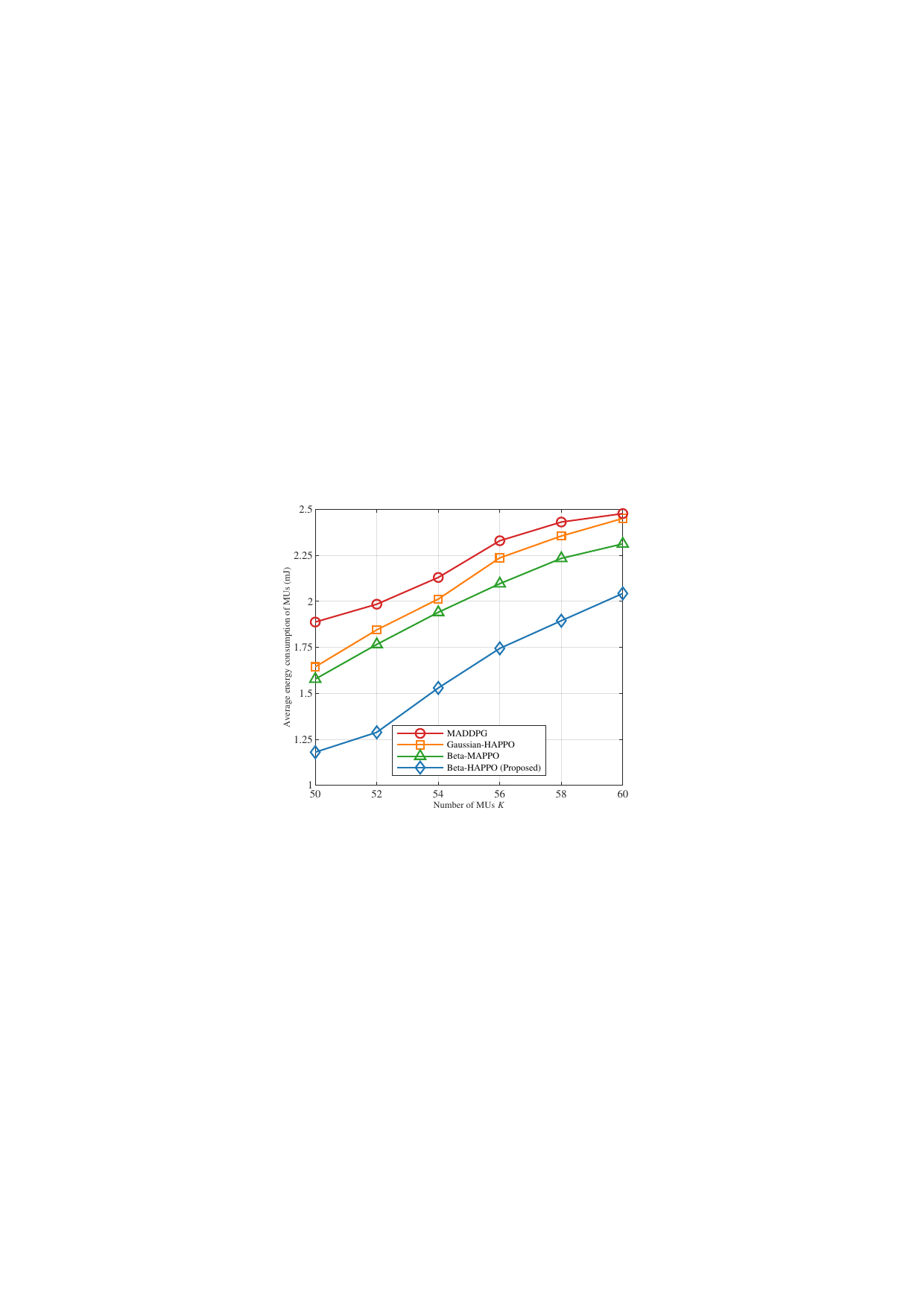}}
	\vspace{-1em}
	{\caption{Average energy consumption of MUs versus a large number of MUs.}
	\label{fig:large-user}}
	\vspace{-1em}
\end{figure}

\begin{figure}[t]
	\centerline{\includegraphics[width=3in]{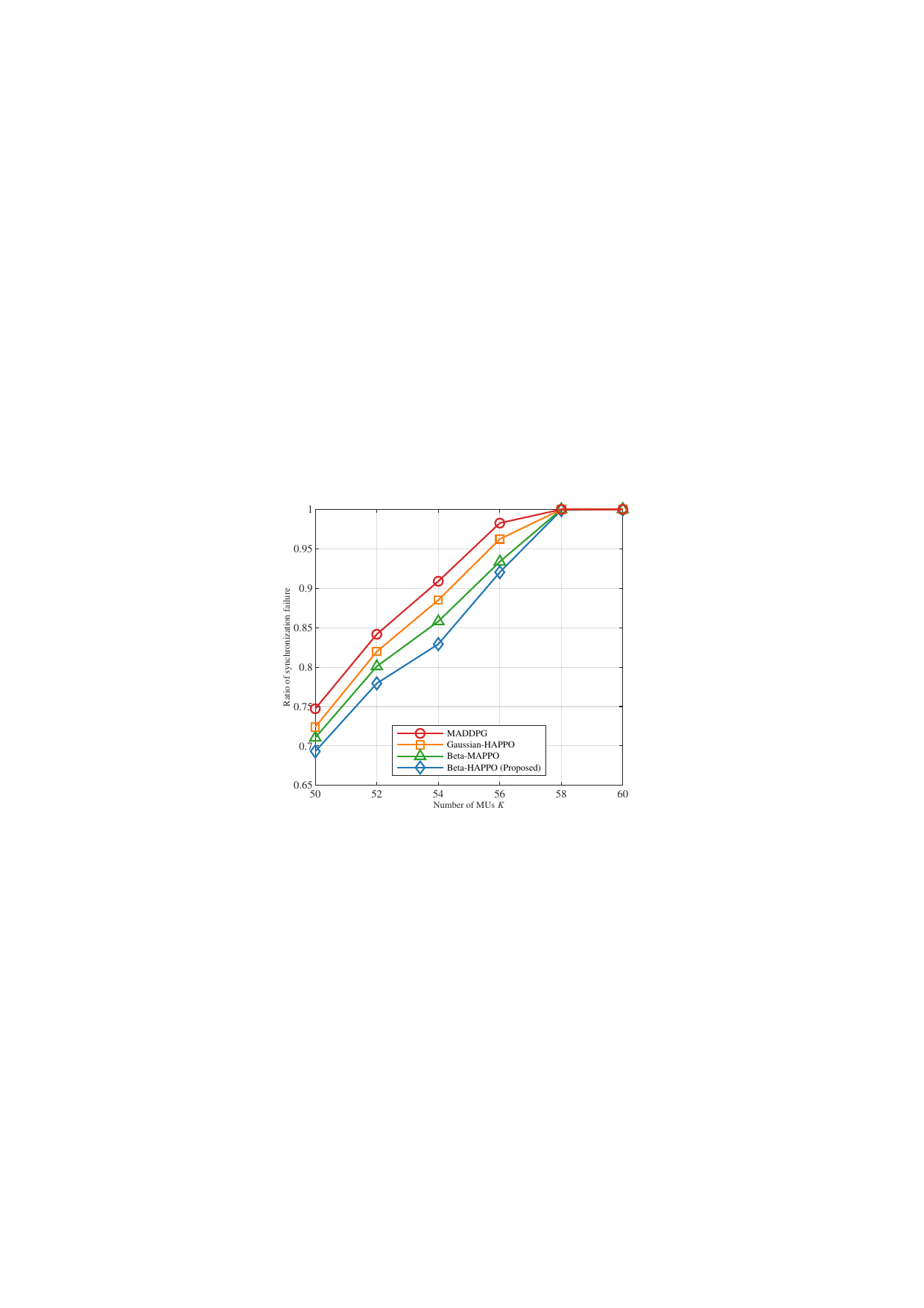}}
	\vspace{-1em}
	{\caption{Synchronization failure ratio versus a large number of MUs.}
	\label{fig:syn}}
	\vspace{-1em}
\end{figure}

{Finally, we evaluate the impact of a large number of MUs $K$ on the energy consumption and synchronization failure ratio of MUs in \ref{fig:large-user} and Fig. \ref{fig:syn}, respectively. As the number of users increases from 50 to 60, it can be seen from Fig. \ref{fig:large-user} that the energy consumption of MU increases.
Based on Fig. \ref{fig:syn},  when the number of MUs reaches 58, the DT synchronization failure ratio reaches  100\%, leading to a network outage. 
		This occurs because as more MUs join without any changes to the current network resources, the resources become exhausted, resulting in insufficient allocation for DT synchronization and inability to effectively complete DT synchronization.}

\section{Conclusion}\label{s:conclusion}
In this paper, we proposed a two-timescale framework for digital twin synchronization and migration. We minimized the long-term average energy consumption by jointly optimizing the communication and computation resource allocation under the reliability constraints of DT synchronization. In order to overcome the computational complexity brought by the non-convexity of the optimization model, we modeled the problem as a POMDP and proposed a two-timescale method based on HAPPO to solve the problem. 
Numerical results verified the convergence of the proposed Beta-HAPPO scheme and its capability in energy saving. 
In future work, we plan to integrate mixed energy sources into DT networks, aiming to achieve further reduction in system's energy consumption. 
{Furthermore, we also plan to conduct in-depth research on the impact of user scheduling and resource allocation on DT synchronization in scenarios with a large number of users and limited network resources.}

\section*{Appendices}
\subsection{Proof of Lemma 1}
\begin{proof}
	According to \eqref{eq:que} and $\left([x]^{+}\right)^2\leq x^2 $, we have
	\begin{equation} \label{eq:diffq}
		\begin{split}
			(Y_k[n+1])&^2-(Y_k[n])^2 \leq \left[Y_k[n]+X_k[n]- \varepsilon \right]^2-(Y_k[n])^2\\
			&\leq (X_k[n])^2 + \varepsilon^2+2Y_k[n]\left[X_k[n]-\varepsilon \right].
		\end{split}
	\end{equation}
	By summing up the results from \eqref{eq:diffq} over $n \in \{qT, qT+1,\cdots, (q+1)T-1\}$ and incorporating the conditional expectation with respect to $Y_k[qT]$, we can derive an upper bound for $\Delta \mathcal{L}_k[qT]$, which is given as follows:
 % \vspace{-1em}
	\begin{equation} \label{eq:diup}
		\begin{split}
			&\Delta \mathcal{L}_k[qT]\\
			% &= \mathbb{E}\left\{\mathcal{L}(Y_k[(q+1)T])-\mathcal{L}(Y_k[qT])\bigg\vert Y_k[qT] \right\}\\
			&= \mathbb{E}\left\{\frac{1}{2}(Y_k[(q+1)T])^2-\frac{1}{2}(Y_k[qT])^2\bigg\vert Y_k[qT] \right\}\\
			&\leq \mathbb{E}\left\{\frac{1}{2}\sum \limits_{n\in\mathcal{T}_q}(X_k[n])^2+ \sum \limits_{n\in\mathcal{T}_q} (Y_k[n])[X_k[n]-\varepsilon]\bigg\vert Y_k[qT] \right\}\\
			&\quad +\frac{1}{2}\varepsilon^2T\\
			&\overset{(a)}{\le} \frac{1}{2}\mathbb{E}\left\{\sum \limits_{n\in\mathcal{T}_q}a_k[n]\right\}+\mathbb{E}\left\{\sum \limits_{n\in\mathcal{T}_q} (Y_k[n])[X_k[n]-\varepsilon]\bigg\vert Y_k[qT]\right\}\\
			&\quad +\frac{1}{2}\varepsilon^2T\\
			&\overset{(b)}{=}\frac{1}{2}(\lambda_k+\varepsilon^2)T+\mathbb{E}\left\{\sum \limits_{n\in\mathcal{T}_q} (Y_k[n])[X_k[n]-\varepsilon]\bigg\vert Y_k[qT]\right\},
		\end{split}
	\end{equation}
where $(a)$ holds because $(X_k[n])^2=X_k[n]\leq a_k[n]$ and $a_k[n]$ is independent of $Y_k[qT]$. $(b)$ holds because $a_k[n]$ is i.i.d over slot with $\mathbb{E}\left\{a_k[n]\right\}=\lambda_k$. Finally, let $B_{k,1}=(\lambda_k+\varepsilon^2)T/2$, we can yield the result \eqref{eq:diup}. 
\vspace{-1em}
\end{proof}

\subsection{Proof of Lemma 2}
\begin{proof}
	Since $X_k[n]\in \{0,1\}$, the queue length $Y_k[n]$ for each time slot $n$ is bounded by
	\begin{equation} \label{eq:bound}
		Y_k[qT]-(n-qT)\varepsilon \leq Y_k[n] \leq Y_k[qT]+(n-qT)(1-\varepsilon).
	\end{equation}
Using \eqref{eq:bound}, it can be observed that the term $\mathbb{E}\{\sum \limits_{n\in\mathcal{T}_q}Y_k[n][X_k[n]-\varepsilon]\vert Y_k[qT] \}$ in \eqref{eq:up1} can be bounded as
	\begin{equation} \label{eq:lee2}
		\begin{split}
			&\sum \limits_{n\in\mathcal{T}_q}Y_k[n]\left[X_k[n]-\varepsilon \right]\\
			&\leq \sum \limits_{n\in\mathcal{T}_q}\left[Y_k[qT]+(n-qT)(1-\varepsilon)\right]Y_k[n]\\
			&\quad -\sum \limits_{n\in\mathcal{T}_q}[Y_k[qT]-(n-qT)\varepsilon]\varepsilon\\
			&\leq \sum \limits_{n\in\mathcal{T}_q}Y_k[kT][X_k[n]-\varepsilon] -\sum \limits_{n\in\mathcal{T}_q}(n-qT)[(1-\varepsilon)X_k[n]+\varepsilon^2].
		\end{split}
	\end{equation}
Given $Y_k[qT]$, applying the conditional expectation to \eqref{eq:lee2} yields the following expression:
	\begin{equation} \label{eq:lee3}
		\begin{split}
			&\mathbb{E}\left\{\sum \limits_{n\in\mathcal{T}_q} (Y_k[n])[X_k[n]-\varepsilon]\bigg\vert Y_k[qT]\right\}\\
			&\leq \mathbb{E}\left\{\sum \limits_{n\in\mathcal{T}_q}Y_k[kT][X_k[n]-\varepsilon]\bigg\vert Y_k[qT]\right\}\\
			&\quad -\mathbb{E}\left\{\sum \limits_{n\in\mathcal{T}_q}(n-qT)[(1-\varepsilon)X_k[n]+\varepsilon^2]\bigg\vert Y_k[qT]\right\}\\
			&\leq\mathbb{E}\left\{\sum \limits_{n\in\mathcal{T}_q}Y_k[kT][X_k[n]-\varepsilon]\bigg\vert Y_k[qT]\right\}\\
			&\quad +\frac{1}{2}T(T-1)[(1-\varepsilon)\lambda_k+\varepsilon^2].
		\end{split}
	\end{equation}
Based on  \eqref{eq:lee3} and setting $B_{k,2}\triangleq B_{k,1}+(T-1)[(1-\varepsilon)\lambda_k+\varepsilon^2]/2$, we can additionally relax the inequality \eqref{eq:up1} into \eqref{eq:up2}, which completes the proof. 
\end{proof}

\bibliographystyle{IEEEtran}
\bibliography{IEEEabrv,refs}

\end{document}